\documentclass[format=acmsmall, review=false]{acmart}

\usepackage{acm-ec-26}
\usepackage{booktabs} % For formal tables
\usepackage[ruled]{algorithm2e} % For algorithms

\AtBeginDocument{%
  \fancypagestyle{firstpagestyle}{%
    \fancyhf{}%
  }%
}

\makeatletter
\def\@ACM@checkaffil{}%
\makeatother

\SetAlFnt{\small}
\SetAlCapFnt{\small}
\SetAlCapNameFnt{\small}
\SetAlCapHSkip{0pt}
\IncMargin{-\parindent}

\setcitestyle{authoryear}

\usepackage{amsmath}
\usepackage{mathtools}

\usepackage{amsfonts}
\usepackage{graphicx}
\usepackage{tikz}
\usepackage{array,booktabs,multirow}

\usepackage{parskip}
\usepackage{setspace}

\usepackage{cleveref}
\usepackage{enumitem}
\newcommand{\sw}{\text{sw}}
\newcommand{\D}{\mathcal{D}}

\usepackage{xcolor}

\DeclarePairedDelimiter{\ceil}{\lceil}{\rceil}
\DeclarePairedDelimiter{\floor}{\lfloor}{\rfloor}
\def\mc{\ensuremath\mathcal}

\newcommand{\mech}{M}

\DeclareMathOperator{\E}{\mathbb{E}}

\DeclareMathOperator{\rank}{\mathrm{rank}}

\title[Truthful Fair Division under Stochastic Valuations]{Truthful Fair Division under Stochastic Valuations}

\author{Daniel Halpern}
\affiliation{%
  \institution{Google Research}
}
\email{dhalpern@google.com}

\author{Alexandros Psomas}
\affiliation{%
  \institution{Purdue University}
}
\affiliation{%
  \institution{Google Research}
}
\email{apsomas@purdue.edu}

\author{Shirley Zhang}
\affiliation{%
  \institution{Harvard University}
}
\email{szhang2@g.harvard.edu}

\begin{abstract}
We study no-money mechanisms for allocating indivisible items to strategic agents with additive preferences under a stochastic model. In this model, items' values are drawn from an underlying distribution and mechanisms are evaluated with respect to this draw (e.g., in expectation, or with high probability). Motivated by worst-case impossibilities which show that truthfulness severely restricts fairness and efficiency, we ask whether truthful mechanisms continue to perform poorly on random instances.

We first focus on dominant-strategy incentive compatible (DSIC) mechanisms. For two agents, we obtain a tight picture. Specifically, we show that there exists a distribution under which no DSIC mechanism achieves an expected welfare approximation better than $\frac{2+\sqrt{2}}{4}\approx 0.854$, and we give a DSIC mechanism that matches this bound for all distributions simultaneously. We further show that, for every distribution, there exists a DSIC mechanism that is envy-free with high probability and obtains the same welfare. A key ingredient is a new, tight connection between welfare guarantees of a family of DSIC, no-money mechanisms and i.i.d.\ prophet inequalities. This connection allows us to generalize to $n$ agents; in particular, we obtain a DSIC mechanism that achieves a $\approx 0.745$ approximation to welfare, and another DSIC mechanism achieving a $1/2$-approximation welfare that is envy-free with high probability.

We then turn to Bayesian incentive compatibility (BIC). Under i.i.d.\ valuations, we show that BIC comes at essentially no cost: we design a prior-independent BIC mechanism that achieves a $(1-\varepsilon)$-approximation to the optimal welfare, while being envy-free with high probability. Under independent but non-identical priors, we obtain BIC mechanisms that are $(1-\varepsilon)$-approximately Pareto efficient and envy-free with high probability.
\end{abstract}

\begin{document}

% Title page for title and abstract only.
\begin{titlepage}

\maketitle
\end{titlepage}

\section{Introduction}

We study the allocation of indivisible items among strategic agents with additive preferences in the absence of monetary transfers. In this classic setting, the gold standard is mechanisms that are simultaneously fair, economically efficient (e.g., Pareto optimal), and strategically robust (e.g., truthful reporting is a dominant strategy for every agent).

While the literature has uncovered a plethora of natural rules that are both fair and efficient, the same cannot be said for combinations with strategic robustness. Impossibility results paint a bleak picture.
For instance, deterministic truthful mechanisms cannot guarantee envy-freeness up to one good (EF1)~\cite{amanatidis2017truthful}, a canonical fairness notion for indivisible goods~\cite{lipton2004approximately}. This can be extended to $\text{EF}^{+u}_{-v}$ for any fixed choice of $u$ and $v$~\cite{bu2024truthful}.\footnote{An allocation satisfies $\text{EF}^{+u}_{-v}$ if no agent envies another agent after adding $u$ items to their bundle and removing $v$ items from the other agent's bundle.}  Moreover, the only mechanism that is both Pareto optimal and truthful is a serial dictatorship~\cite{schummer1996strategy}, arguably the least fair mechanism imaginable. In short, mechanisms that are simultaneously fair, efficient, and truthful do not exist. As a result, much of the fair division literature has largely set aside issues of incentives and instead focused purely on fairness and efficiency.

However, a core tenet in economics is that agents \emph{are} strategic. If agents can easily manipulate the outcome, how can we trust the promised fairness guarantees of the resulting allocations? This, on the surface, seems to create a fundamental dilemma for the field.

We posit that all hope may not be lost. A closer look at the aforementioned negative results reveals a commonality: they all hold in the \emph{worst case}, relying on carefully constructed pathological instances.
To understand the performance beyond the worst case,  much of the recent fair division literature has adopted distributional assumptions. For example, in asymptotic fair division it is well understood that fairness guarantees that are unattainable in the worst case (e.g., envy-freeness) are often achievable with high probability on random instances. What is far less understood is whether the same phenomenon holds for incentives. Do truthful mechanisms --- that can't meaningfully balance fairness and efficiency under worst-case analysis --- perform just as poorly on random instances?

In this paper, we study truthful no-money mechanisms under stochastic valuations. Our main question is whether such mechanisms can provide meaningful fairness and efficiency guarantees with high probability. Put differently, is it possible that by going beyond worst-case analysis we can actually achieve the fairness--efficiency--truthfulness trifecta on most fair division instances?

\subsection{Our Contribution}

A subtlety arises when considering incentives under stochastic valuations, which is that there is no longer a single canonical notion of truthfulness. The definition depends on whether an agent considers the consequences of their action ex-post or in expectation over the other agents' values. In this paper, we consider both \emph{Dominant Strategy Incentive Compatibility} (DSIC) and \emph{Bayesian Incentive Compatibility} (BIC) to more deeply understand the trade-offs imposed by these constraints.

We begin by focusing on DSIC mechanisms. In our stochastic model, agents' valuations are drawn from a known distribution $\D$, and the mechanism is evaluated with respect to the random draw of valuations (e.g., in expectation or with high probability).

We start with the two-agent case.
In this setting, we can leverage a known characterization of DSIC no-money mechanisms and restrict attention to the class of \emph{picking--exchange} mechanisms~\cite{amanatidis2017truthful}. We give a complete picture.
First, we show that there exists a distribution $\D$ such that no DSIC mechanism can, in expectation, achieve an $\alpha$-approximation to the optimal welfare for any $\alpha > \frac{2+\sqrt{2}}{4}\approx 0.854$ (\Cref{thm:neg-2}). Second, we show that there exists a mechanism that achieves, in expectation, a $\frac{2+\sqrt{2}}{4}$-approximation to the optimal welfare, for all underlying distributions \emph{simultaneously} (\Cref{thm:pos-welf-2}). Furthermore, we identify a pair of mechanisms, such that, given a distribution $\D$, at least one of the two mechanisms will achieve a $\frac{2+\sqrt{2}}{4}$-approximation to the optimal welfare, as well as be envy-free with high probability (\Cref{thm: main positive result for two agents}).

A notable aspect of our positive results is that they reveal a tight and, to the best of our knowledge, previously unnoticed connection to the rich literature on prophet inequalities. Concretely, in~\Cref{thm:prophet-connection} we show that the welfare approximation ratio of a natural class of \emph{picking-sequence} mechanisms (which are DSIC and do not use monetary transfers) coincides with the optimal guarantees in the i.i.d. prophet inequality problem. This correspondence explains the appearance of the constant $\frac{2+\sqrt{2}}{4}$ (which is the best possible ratio in a specific instance of the i.i.d. prophet inequality problem) in the two-agent case, and furthermore provides a principled route for generalizing our positive results to $n > 2$ agents. Specifically, for $n > 2$ agents, we prove that there exists a DSIC mechanism that achieves an $\approx 0.745$ approximation to welfare (\Cref{thm:correa-pos-n}) and that a simple DSIC mechanism gives a $1/2$ approximation to the optimal expected welfare, as well as is envy-free with high probability (\Cref{thm: n agents dsic 2 approx}).

Finally, we turn to BIC mechanisms.
Under i.i.d.\ stochastic valuations, it is known that it is possible to simultaneously achieve optimal welfare and envy-freeness with high probability~\cite{dickerson2014computational}. We prove that BIC comes at essentially no cost (!); specifically, we give a BIC mechanism that guarantees a (1 - $\varepsilon$)-approximation to the optimal welfare and is envy-free with high probability (\Cref{thm: bic iid theorem}). Our mechanism requires no knowledge of the underlying distribution (it just assumes that an underlying distribution exists).
This result can also be extended to independent but non-identical distributions per agent. Under the same conditions as~\cite{baienvy2022} (see related work), when the underlying distributions are known, we can get a mechanism that is BIC, (1 - $\varepsilon$)-approximately Pareto efficient and envy-free with high probability.

\subsection{Related Work}

\paragraph{Incentives in Fair Division}

Numerous impossibility results in fair division say that ``fairness'' and ``truthfulness'' are incompatible for various fairness notions (where ``truthfulness" is used as a synonym to DSIC)~\cite{bu2024truthful,amanatidis2017truthful,amanatidis2016truthful,markakis2011worst,lipton2004approximately}. Furthermore, (Pareto) efficiency and truthfulness are known to imply dictatorships~\cite{schummer1996strategy,papai2000strategyproof,papai2001strategyproof}. In fact, a truthful, non-bossy, and neutral mechanism must be a serial-quota mechanism~\cite{babaioff2025truthful}.
Therefore, much of the work in truthful fair division attempts to achieve fairness, efficiency, and incentive compatibility by restricting the family of instances considered, e.g., dichotomous valuations~\cite{bu2024truthful,halpern2020fair,barman2022truthful,amanatidis2021maximum,babaioff2021fair}, Leontief valuations~\cite{ghodsi2011dominant,friedman2014strategyproof,parkes2015beyond}, or a small number (e.g., $n=2$) agents~\cite{bu2024truthful}.
Some works use ``money burning,'' i.e., leave some resources unallocated, to recover incentive compatibility in addition to fairness and efficiency~\cite{cole2013mechanism,friedman2019fair,abebe2020truthful}. 

In all aforementioned works, ``truthfulness'' is a synonym for ``dominant strategy incentive compatibility (DSIC).'' Some recent works in fair division explore whether the aforementioned impossibility results can be bypassed by relaxing DSIC to truthful in expectation~\cite{mossel2010truthful}, Bayesian incentive compatibility~\cite{gkatzelis2024getting}, not obviously manipulable~\cite{psomas2022fair,ortega2022obvious}, or partial strategyproofness~\cite{mennle2021partial}.~\cite{hartman2025s} explores how manipulable different mechanisms are. Finally, another recent, related line of work explores whether the Nash equilibria of non-truthful mechanisms (e.g., the round robin procedure) satisfy fairness guarantees~\cite{amanatidis2024allocating,amanatidis2023round}.

\paragraph{Asymptotic Fair Division}

\citet{dickerson2014computational} initiate the study of fair division under stochastic preferences, and proved that the welfare maximizing allocation is envy-free with high probability when the number of goods $m \in \Omega(n \log n)$ (where $n$ is the number of agents) and items' values are drawn i.i.d. from a fixed distribution with bounded PDF. \citet{manurangsi2021closing} improve this result, and show that an envy-free allocation exists with high probability when $m \in \Omega(n \log n/ \log\log n)$. \citet{baienvy2022} study the existence of envy-free allocations for the case of independent but non-identically distributed agents' values. They show that under mild restrictions on the distribution (in particular, the distribution PDF values are both lower and upper bounded on the non-zero interval), envy-free and Pareto optimal allocations exist with high probability. Furthermore, such an allocation is achieved by maximizing \emph{weighted} welfare, where the weights depend on the agents' specific distributions.
\citet{benade2024existence} study independent agents with non-additive stochastic valuations, and \citet{benade2024fair} study non-independent agents in an online setting. We note that all aforementioned works treat the distributions as constants; \citet{halpern2025online} show that envy-free allocations exist with high probability even without this assumption, and even in an online setting.

Closer to our work, and to the best of our knowledge the only other work that studies incentives in the asymptotic regime, \citet{garg2025designing} give a truthful in expectation mechanism that outputs envy-free allocations with high probability, even in a model where agents' valuations are correlated. In contrast, in this work, we give DSIC mechanisms (a stronger notion of incentive compatibility) that are envy-free with high probability, as well as approximately maximize welfare; alas, we require agents to be independent.

We will also make use of some results from i.i.d.\@ prophet inequalities. We will introduce the necessary preliminaries and discuss related work when they are used in \Cref{sec:n-agents}.

\section{Model}

A set of $m$ indivisible items $\mathcal{M} = \{1,2, ..., m\}$ must be allocated among a set of $n$ agents $\mathcal{N} = [n]$ with additive preferences. An \emph{allocation} $A = \{A_1, A_2,...,A_n\}$ partitions the $m$ items in $\mathcal{M}$ among the $n$ agents, where the set of items allocated to agent $i\in [n]$ is $A_i$ (we refer to $A_i$ as agent $i$'s allocation or bundle). Note that each item $j \in \mathcal{M}$ must be assigned to exactly one agent in $\mathcal{N}$. Each agent $i \in \mathcal{N}$ has a valuation vector $v_i \in \mathbb{R}^m_{\geq 0}$, where $v_{i,j}$ is agent $i$'s value for item $j$. The utility of agent $i$ for an allocation $A$ to be $u_i(A) = \sum_{j \in A_i} v_{i,j}$. 

We assume that agents' values are drawn i.i.d. from a distribution $\D$, i.e. $v_{i,j} \sim \D$ for all $i \in \mathcal{N}$ and $j \in \mathcal{M}$. We make some minimal assumptions on $\D$, namely, that it is nonnegative, bounded, has nonzero variance, and continuous.\footnote{Strictly speaking, continuous distributions are not necessary for any of our results; we simply assume it for ease of exposition. This ensures quantiles for item values are well-behaved. For discrete distributions, care must be taken to handle tie-breaking.} We call a distribution satisfying these conditions \emph{valid}. Note that none of our results will depend on the rescaling of $\D$; thus, it is without loss of generality to assume that $\D$ is supported on $[0, 1]$.  

While values $v_i$ are private and known only to agent $i$, agent $i$ can strategically report their values as a vector of bids $b_i$, where $b_{i,j}$ represents agent $i$'s bid for item $j$. Note that the agents are strategic and therefore may misreport their bids, so $b_i$ is not necessarily equal to $v_i$. A mechanism is a function  $\mech(\cdot)$ that maps a profile of bids $\textbf{b} = (b_1,...,b_n)$  to an allocation $\mech(\textbf{b})$.

A mechanism is \emph{incentive compatible} if it is robust to strategic agents. In this paper, we focus on two formalizations of incentive compatibility. First, a mechanism is \emph{dominant strategy incentive compatible (DSIC)} if truthfully bidding their private values is a dominant strategy for every agent. Formally, a mechanism  $\mech(\cdot)$ is DSIC if, for every agent $i \in \mathcal{N}$, every possible bid $b_i$, and every possible bids of other agents $b_{-i}$, 
\[
    u_i(\mech(v_i, b_{-i})) \ge u_i(\mech(b_i, b_{-i})).
\]
Second, a mechanism is \emph{Bayesian incentive compatible (BIC)} if every agent truthfully bidding their private values is a Bayesian Nash equilibrium. This means that no agent can achieve higher expected utility by bidding anything other than their true valuations, where the expectation is taken over the random values of all other agents. Formally, a mechanism $\mech(\cdot)$ is BIC if for every possible bid $b_i$, 
\[
    \E_{v_{-i} \sim \D^{n-1}}[u_i(\mech(v_i, v_{-i}))] \ge \E_{v_{-i} \sim \D^{n-1}}[u_i(\mech(b_i, v_{-i}))].
\]

An allocation $A$ is \emph{envy-free} (EF) if, for every pair of agents $i_1,i_2 \in \mathcal{N}$, agent $i_1$ has a higher utility for their own bundle than for agent $i_2$'s bundle. Formally, for all $i,j \in \mathcal{N}$, $u_{i_1}(A_{i_1}) \geq u_{i_1}(A_{i_2})$. 

The social welfare of an allocation $A$ is the total utility across all agents, i.e. $\sw^{\mathbf{u}}(A) = \sum_{i \in \mathcal{N}} u_i(A)$.
We say a mechanism $\mech$ achieves an $\alpha$ approximation to welfare if \[\E_{\mathbf{u}}[\sw^\mathbf{u}(\mech(\mathbf{v}))] \ge \alpha \cdot \E_{\mathbf{u}}[\max_{A} \sw^\mathbf{u}(A)].\]
We will say the mechanism achieves an $\alpha - o(1)$ approximation to welfare if the approximation approaches $\alpha$ as $m$ grows large. Furthermore, we will say a mechanism is envy-free with high probability if the probability it outputs an envy-free allocation approaches $1$ as $m$ grows large.

\section{DSIC Mechanisms}

\subsection{Two Agents}

We begin by studying DSIC mechanisms for the case of $n = 2$ agents. In the context of stochastic values, one might hope that the relaxation from worst-case analysis would allow for mechanisms that approximate the optimal welfare arbitrarily well. However, as we show next, this is not the case; there is a constant ceiling on the welfare approximation achievable by any DSIC mechanism.

\begin{theorem}\label{thm:neg-2}
    When $n = 2$, for all $\varepsilon > 0$, there exists a valid distribution $\D$ such that no DSIC mechanism achieves a $\frac{2 + \sqrt{2}}{4} + \varepsilon$ approximation to welfare. 
\end{theorem}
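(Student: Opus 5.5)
The plan is to reduce the problem, via the characterization of two-agent DSIC no-money mechanisms as picking--exchange mechanisms~\cite{amanatidis2017truthful}, to a two-draw i.i.d.\ optimal-stopping problem. We then pick a distribution on which that stopping problem has ratio exactly $\frac{2+\sqrt{2}}{4}$.

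\textbf{Step 1 (reduction to a per-item, one-sided decision).} Fix an arbitrary DSIC mechanism; by the characterization it is a picking--exchange mechanism. In the picking part, each agent $i$ is offered a set $O_i$ (disjoint from the other agent's offered set) and keeps a subset of it chosen by a rule that depends only on $i$'s own reported values; the unchosen items go to the other agent. So for any item in the picking part, whether it is allocated to agent $1$ or $2$ is decided by looking at \emph{one} agent's values only. The item is then received by the other agent at an unconditioned value, which has mean $\E[v]$ by independence.

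Up to relabeling, item $j$ therefore contributes expected welfare at most
\[
\sup_{\tau\text{ a stopping rule}} \E\bigl[v_{1,j}\mathbf 1\{\text{keep}\}+v_{2,j}\mathbf 1\{\text{pass}\}\bigr] \;=\; \max_{t}\; \E[v\mathbf 1\{v>t\}] + \Pr[v\le t]\,\E[v],
\]
where $v\sim\D$. This is the optimal value of the i.i.d.\ prophet game with two draws. The remaining subtlety is that the keep decision may depend on the agent's values for \emph{other} items (e.g.\ ``keep your top $k$''). This does not help: conditioned on everything else, the decision for item $j$ is still a measurable function of $v_{1,j}$ and of variables independent of $v_{2,j}$. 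So linearity of expectation and the single-item bound apply item by item.

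\textbf{Step 2 (handling exchanges).} Exchange deals swap items between the two agents, and these trades do use both agents' values. I expect this to be the main obstacle. My first attempt is to use the structure from~\cite{amanatidis2017truthful}: exchanges are item-pair swaps executed only when both agents prefer to swap, and the set of possible exchange pairs is fixed in advance. I would bound the welfare gain from each exchange pair by its two items' own contribution. Then I would argue that, on the chosen distribution, an exchange pair yields per-item welfare no better than the two-draw prophet value; alternatively, that a mechanism with $k$ exchange pairs can be simulated by a picking mechanism losing nothing in expectation. Failing a clean general argument, I would choose $\D$ so that exchanges are provably useless. For a two-point-like distribution, a mutually beneficial swap requires each agent to value its incoming item high and its outgoing item low. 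One can then compute directly that the per-item welfare from a swap pair is at most the prophet bound.

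\textbf{Step 3 (choosing $\D$).} Take $\D$ to be a continuous smoothing of the two-point distribution: $1$ with probability $p$ and $0$ otherwise. The benchmark is $\E[\max(v_1,v_2)]=1-(1-p)^2=2p-p^2$, while the best one-sided rule (keep iff $v=1$) gets $p+(1-p)p$. That is the same value, so the hard instance is not this one. Instead take a three-level distribution: a high value $H$ with probability $p$, value $1$, and a small value. I will optimize $p$ and $H$ so that every threshold choice attains ratio at most $\frac{2+\sqrt{2}}{4}$; the $\sqrt2$ should come out of balancing the two candidate thresholds. I then smooth the distribution so that it is continuous, bounded, and has nonzero variance, at a cost of at most $\varepsilon$ in the ratio. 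Combining with Steps 1--2, summing over the $m$ items gives expected welfare at most $(\frac{2+\sqrt2}{4}+\varepsilon)\,m\,\E[\max(v_1,v_2)]$, which proves the theorem.
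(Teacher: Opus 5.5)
\textbf{Step 2 is a genuine gap.} Your Step 1 is sound and is essentially how the paper treats picking components. The split of a picking component depends only on the picker's values, so the receiver's value enters at its unconditional mean, and per item you get at most the best one-sided threshold value. On the paper's instance (base value $\frac{1-q}{p}$ with probability $p$, $1$ with probability $q=2-\sqrt2$, $0$ otherwise, plus $\mathrm{Unif}[0,\delta]$ noise, $p,\delta\to 0$) this is at most $\sqrt2+\delta$ per item, against a benchmark of about $4\sqrt2-4$.

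Step 2 is where most of the work lies, and you give no argument for it. The structure you assume is not the one in the characterization. An exchange component is a fixed partition of its items into bundles $S$ and $T$ of arbitrary sizes $\ell$ and $k$, and the output is $(T,S)$ iff both agents prefer the other's bundle. It is not a family of item-pair swaps, so your ``incoming item high, outgoing item low'' reasoning covers only $\ell=k=1$. Neither fallback is supported either:
\begin{itemize}
\item An exchange uses both agents' information, so there is no general reason it can be simulated by a picking rule without loss.
\item Exchanges are not trivially useless on the hard instance. A $1$-for-$1$ exchange earns $\E[v]+\tfrac14\E[|v-v'|]\approx\frac{4\sqrt2-3}{2}$ per item, strictly more than the $\E[v]\approx 1$ of an uninformed split.
\item For $s=\ell+k=3$ the paper's bound ($\approx 4.22$) is only barely below $3\sqrt2\approx 4.24$.
\end{itemize}
So optimizing thresholds as in your Step 3 does not by itself certify the distribution; it must be checked against every exchange size.

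\textbf{The missing idea} is to use that an exchange component outputs only $(S,T)$ or $(T,S)$, and to split its welfare into non-high and high draws.
\begin{itemize}
\item The non-high part is at most $\E[\max(W,W')]$ for two i.i.d.\ binomials. Writing $\E[\max(W,W')]=\E[W]+\tfrac12\E[|W-W'|]$ and applying Jensen gives at most $sq+\sqrt{sq(1-q)/2}+s\delta$.
\item For the high part, condition on the number $h$ of high draws among the $2s$ agent--item pairs. The realized high welfare is at most $\max(r,h-r)\frac{1-q}{p}$, where $r$ is the number of high draws in endowed positions. This $r$ is hypergeometric and convex-dominated by $\mathrm{Bin}(h,1/2)$, so $\E[|r-h/2|]\le h/4$ for $h\ge 2$.
\item For $h=1$, where $\max(r,h-r)=1$ always, you need a separate argument that the single high draw is lost with probability at least $1/4$. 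This happens when it sits in the wrong agent's endowment and the other agent, whose values are all low, declines to trade.
\end{itemize}
Together these give at most $\frac{3-q}{2}s+\sqrt{sq(1-q)/2}\le\sqrt2\,s$ for $s\ge 3$. The case $s=2$ follows from the direct $1$-for-$1$ computation, which gives at most $\frac{3-q^2}{2}+\delta<\sqrt2$ per item. Without a bound of this kind, your argument only rules out pure picking mechanisms.
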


The proof of this result builds on a structural characterization of DSIC mechanisms for two agents established by \citet{amanatidis2017truthful}. At a high level, every DSIC mechanism for two agents can be decomposed into two fundamental types of mechanisms: \emph{picking} mechanisms and \emph{exchange} mechanisms.
A picking mechanism is defined by a set of allowable subsets $\mathcal{O} \subseteq \mathcal{P}(\mathcal{M})$ and a designated agent $i$. Given the reported bids, agent $i$ receives the subset $A \in \mathcal{O}$ that maximizes their perceived value, while the other agent receives the remainder $\mathcal{M} \setminus A$. This is essentially a ``menu'' approach where one agent chooses their favorite bundle from a restricted list.

An exchange mechanism, by contrast, relies on a fixed partition of items $\mathcal{M}$ into sets $S$ and $T$. We can think of agent 1 as being ``endowed" with $S$ and agent 2 with $T$. A trade is considered favorable for an agent if they strictly prefer the other agent's endowment to their own; it is considered unfavorable if they strictly prefer their own endowment to the other agent's. An exchange mechanism outputs $(T, S)$ if the trade is favorable for both agents and $(S, T)$ if the trade is unfavorable for either. (In the cases of indifference, the mechanism must still select and output one of $(S, T)$ or $(T, S)$.)

\citet{amanatidis2017truthful} prove that all DSIC mechanisms for two agents are \emph{picking-exchange} mechanisms: the items are partitioned into subsets $\mathcal{M} = M_1 \sqcup \cdots \sqcup M_\ell$ and each subset $M_k$ is allocated according to a fixed picking or exchange mechanism. 
Our proof for \Cref{thm:neg-2} proceeds by providing a single distribution for which neither of these building blocks can perfectly capture the social welfare.  The distribution we consider takes value $0$ or $1$ most of the time, and with small probability takes a (very) high value $H$. For this distribution, we then show that no picking or exchange mechanism can achieve higher than a $\frac{2+\sqrt{2}}{4} + \varepsilon$ approximation to welfare. 

Intuitively, a picking mechanism fails to achieve a high approximation to welfare because, in a picking mechanism, the allocation of items is determined solely by agent 1. Agent 1 does not know which items agent 2 has a high value for, and therefore, in expectation, agent 1 will pick some items where they have value 1 for the item but agent 2 has value $H$ for the item. This leads to an avoidable loss in expected social welfare for any picking mechanism, which we show is at least a multiplicative factor of $\frac{2+\sqrt{2}}{4} + \varepsilon$.

Bounding the expected welfare of all exchange mechanisms requires a more involved argument. Specifically, we bound the expected welfare contribution of all non-high-value items and all high-value items separately. For the former, we observe that the exchange component $(S,T)$ must output either the allocation $(S,T)$ or $(T,S)$; therefore, we can bound the contribution of all non-high-value items by bounding the expected maximum welfare of all non-high-value items in $(S,T)$ or $(T,S)$. The bound for high-value items is technically more involved. To bound the expected welfare contribution of all high-value items, we first condition on the total number of high-value items and then bound the welfare contribution based on correctly and incorrectly assigned high-value items. Finally, we can combine the expected welfare contribution of the low-value items and the high-value items to show that any exchange mechanism is at best a $\frac{2+\sqrt{2}}{4} + \varepsilon$ approximation.

\begin{proof}[Proof of \Cref{thm:neg-2}] Fix $\varepsilon > 0$. We define a family of distributions parameterized by positive constants $p$ and $\delta$. Taking $p$ and $\delta$ sufficiently small (in terms of $\varepsilon$) will give the desired result.

To draw a value $v$, we first sample a ``base" value:
\[
v_{\text{base}} =
\begin{cases}
    \frac{1 - q}{p} & \text{with probability } p \\
    1 & \text{with probability } q \\
    0 & \text{with probability } 1 - q - p
\end{cases}
\]
where $q = 2 - \sqrt{2} \approx 0.59$. To this base value, we add uniform noise scaled by $\delta$, so we have
$v = v_{\text{base}} + \text{Unif}[0, \delta]$. The noise term's only purpose is to ensure that ties
in agents' valuations for distinct subsets of goods occur with probability 0. We will refer to the above distribution (for sampling a value $v$) as $\mathcal{D}$ throughout this proof. 

The expected value of a single draw from $\mathcal{D}$ is:
\begin{equation}\label{eq:exp_value_single_draw}
\E[v] = p \cdot \frac{1-q}{p} + q \cdot 1 + (1-q-p) \cdot 0 + \frac{\delta}{2} = (1-q)+q = 1 + \frac{\delta}{2}.
\end{equation}

\textbf{The Welfare Benchmark.} The expected optimal social welfare per item is the expected maximum
of two i.i.d. draws $v_1$ and $v_2$ from $\mathcal{D}$. The probability that at least one draw is a ``high value"
(i.e., $v_{\text{base}} = \frac{1-q}{p}$) is $1-(1-p)^2 = 2p - p^2$ and the probability that at least one draw is $1$ and no draw is a high value is $q^2 + 2q(1-q-p)$.

The expected value of the max of two draws $v_1$ and $v_2$ is therefore:

\begin{align*}
    \E[\max(v_1,v_2)] &= \frac{1-q}{p}\left(2p - p^2 \right) + 1 \cdot (q^2 + 2q(1-q-p))+ \delta/2 \\
    &= (1-q)(2-p) + 2q - q^2 - 2pq + \delta/2\\
    &= 2 - q^2 - qp - p + \delta/2 \\
    &= 4\sqrt{2} - 4 - qp - p + \delta/2.
\end{align*}

The rest of the proof will focus on showing that no DSIC mechanism can achieve expected welfare greater than 
 $\sqrt{2} + \psi(\delta,p)$ per item, for some function $\psi$ that goes to zero as $\delta$ and $p$ go to zero.
Since we already showed that the maximum expected welfare is $4\sqrt{2}-4  - qp - p + \delta/2$ per item, this establishes the desired approximation ratio upper bound of

\begin{equation}\label{eq:overall_to_show}
\frac{\sqrt{2} + \psi(\delta,p)}{4\sqrt{2}-4 - qp - p + \delta/2} \le \frac{\sqrt{2}}{4\sqrt{2} - 4} + \varepsilon =  \frac{2+\sqrt{2}}{4} + \varepsilon,
\end{equation}
for sufficiently small $\delta$ and $p$.

\textbf{Mechanism Decomposition.} From the results in \cite{amanatidis2017truthful}, any DSIC mechanism for two agents
partitions the set of items $M$ into components and runs a distinct sub-mechanism on each component: $M = P_1 \sqcup P_2 \sqcup E_1
\sqcup \cdots \sqcup E_{k}$. The first component is \emph{Picking Mechanisms} (for allocating subsets $P_1$ and  $P_2$), where an
agent is allocated their most preferred bundle from a collection of subsets (subsets of $P_1$ and $P_2$, respectively), and the other agent receives the remaining goods. Note that two picking mechanisms suffice; the first one, where agent 1 picks first, allocates the subset $P_1$, and the second one, where agent 2 picks first, allocates the subset $P_2$.
The second component is  \emph{Exchange Mechanisms}
(for allocating subsets $E_1, \dots, E_k$), where items are endowed to the agents, and traded if and only if both agents prefer the other's bundle. We will prove that for any component corresponding to a set of items $C$ from this partition (with $|C| = s$), the expected welfare generated by these items is at
most 
\begin{equation}\label{eq:to_show}
\left(\sqrt{2} + \psi(\delta,p)\right) \cdot |C| = \left(\sqrt{2} + \psi(\delta,p)\right) s,
\end{equation}
for some function $\psi(\delta,p)$ that goes to zero as $\delta$ and $p$ go to zero.

\textbf{Analysis of Picking Mechanisms.} Without loss of generality, consider $P_1$ with $|P_1| = s$. Although the fraction of items that agent $1$ picks can be arbitrary, the key property we will use is that the allocation rule for $P_1$ depends only on agent 1's values. Condition on agent $1$'s values ($v_{1, j}$ for $j \in P_1$) but assume we have not yet sampled agent $2$'s values.

Suppose agent 1 is allocated a set $S \subseteq P_1$ and agent $2$ gets $P_1 \setminus S$. Let $H = \{j \in P_1 \mid v_{1, j} \geq \frac{1 - q}{p} \}$ be the set of items agent $1$ has high (base) value for in $P_1$, and let $\mathcal{E}_H$ be the event that $H$ is the set of items for which agent $1$ has high  (base) value for. For any $j \in P_1$, if $j \in H$ and agent 1 is given item $j$, then this item contributes at least $\frac{1 - q}{p}$, and at most $\frac{1 - q}{p} + \delta$, to the welfare. For any $j \in P_1$ such that $j \not\in H$, if agent 1 is given item $j$, then this item contributes at most $1 + \delta$ to the welfare. For any item $j \in P_1$, if we give item $j$ to agent 2, then the expected contribution of item $j$ to the welfare, conditioned on $\mathcal{E}_H$, is $\E[v_{2,j} \mid \mathcal{E}_H] = 1 + \frac{\delta}{2}$ by Equation \eqref{eq:exp_value_single_draw}. Therefore, the conditional expected welfare of the component $P_1$ is at most 
\[
    \E[\text{welfare of $P_1$} \mid \mathcal{E}_H] \le |H| \cdot \left( \frac{1-q}{p} + \delta \right) + (|P_1 \setminus H|) \cdot \left( 1+ \delta \right) \leq s + |H|\left(\frac{1-q-p}{p}\right) + \delta s.
\]
Taking an expectation over $H$ gives that the expected welfare of the component $P_1$ is at most
\begin{align*}
    \E[\text{Welfare of $P_1$}] &= \E\left[ \E[\text{welfare of $P_1$} \mid \mathcal{E}_H]\right] \\
    &=    \E\left[ s + |H|\left(\frac{1-q-p}{p}\right)  + \delta s\right] \\
    &= s + \left(\frac{1-q-p}{p}\right) \E[|H|] + \delta s \\
    &= s + \left(\frac{1-q-p}{p}\right) ps + \delta s \\
    &= s(2 - q - p)  + \delta s\\
    &\le (2-q)s  + \delta s
    = \sqrt{2} s + \delta s = (\sqrt{2} + \delta)s,
\end{align*}
as desired, where we used that $\E[|H|] = ps$. The analysis for $P_2$ is identical by symmetry. Therefore, we have shown Equation \eqref{eq:to_show} for picking mechanisms.

\textbf{Analysis of Exchange Mechanisms.} Consider an exchange component $E$ where agent 1 is endowed
with $\ell$ items and agent 2 with $k$ items. Let $s = |E| = \ell+k$. The following argument will work for all $s \ge 3$. At the end of this proof, we separately handle the final nondegenerate case of $\ell = k = 1$. (Note that it is without loss of generality that $\ell \geq 1$ and $k \geq 1$, as otherwise these items are always allocated to a single agent and trade never occurs; this can therefore be represented as part of the picking mechanism.)

We bound the welfare by analyzing the contributions from non-high-value and high-value items separately, where a high-value item is an item with base value $\frac{1-q}{p}$.

\textit{{Contribution from Non-High-Value Items.}} For any item, if an agent's value for that item is not ``high,'' then the item's value is drawn from a $\text{Bern}(\frac{q}{1-p}) + \text{Unif}(0, \delta)$ distribution. Let $W$ be the welfare of non-high-value items in the endowed allocation. Let $W'$ be the welfare of the non-high-value items in the ``flipped'' allocation, where the $\ell$ items given to agent 1 are given to agent 2 instead, and the $k$ items given to agent 2 are given to agent 1. We can decompose $W$ as $W = W_{base} + Z$, where $W_{base}$ is the sum of the Bernoulli contributions, and $Z$ is the sum of the noise terms from $\text{Unif}(0, \delta)$. Note that $W_{base}$ follows a binomial distribution $W_{base} \sim \text{Bin}(s, \frac{q}{1-p}$, and the noise term is bounded by $0 \le Z \le s \delta$ with probability $1$. Similarly, $W' = W'_{base} + Z'$ where $W'_{base}$ is an independent draw from the same binomial distribution and $0 \le Z' \le s \delta$. 

The expected welfare of component $E$ from non-high-values can be upper bounded by $\E[\max(W, W')]$. Using the deterministic bound on the noise, we have:$$\E[\max(W, W')] \le \E[\max(W_{base} + s\delta, W'_{base} + s\delta)] = \E[\max(W_{base}, W'_{base})] + s\delta.$$

Since $W_{base}$ and $W_{base}'$ are independent and identically distributed from $\text{Bin}\left(s, \frac{q}{1-p}\right)$,

we have:
\begin{align*}
\mathbb{E}[\max(W_{base}, W'_{base})] &= \mathbb{E}[W_{base}] + \frac{1}{2}\mathbb{E}[|W_{base} - W'_{base}|] \\
&\le \mathbb{E}[W_{base}] + \frac{1}{2}\sqrt{\text{Var}(W_{base} - W'_{base})} \\
&= \mathbb{E}[W_{base}] + \sqrt{\frac{\text{Var}(W_{base})}{2}}.
\end{align*}

where the second transition follows from Jensen's inequality, and the third follows because the variance of the sum of independent random variables is the sum of their variances. Substituting the mean $\mathbb{E}[W_{base}] = s\frac{q}{1-p}$ and variance $\text{Var}(W_{base}) = s\frac{q}{1-p}(1 - \frac{q}{1-p})$, and combining with the noise error term $s\delta$, we obtain the upper bound:\[sq + \sqrt{\frac{sq(1-q)}{2}} + s \delta .\]

\textit{Contribution from High-Value Items.} Let $h$ be the total count of high-value draws in the endowed and non-endowed items by both agents, i.e., $h = |\{(i, j) \mid v_{i, j} \geq \frac{1 - q}{p} \}|$. Note that
$\E[h] = 2sp$. We will show the following lemma, the proof of which can be found in Appendix \ref{sec:high_welfare_conf_on_h}.

\begin{lemma}\label{lemma:high_welfare_conf_on_h}
    Conditioned on any value of $h$, the expected welfare of high-value items in this component is at most
    \[
        \frac{3}{4}  \, \frac{1-q}{p} \, h + \delta h.
    \]
\end{lemma}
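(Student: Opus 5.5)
The plan is to reduce the lemma to a counting statement about which (agent, item) pairs the exchange component ``captures'', and then to argue that the trade decision cannot be correlated with the location of the high draws strongly enough to capture more than a $3/4$ fraction of them in expectation.

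\textbf{Reduction to capture probabilities.} Call a pair $(i,j)$ with $j\in E$ a \emph{slot}, and say the slot is \emph{high} if $v_{i,j}\ge \frac{1-q}{p}$. A high slot contributes to welfare only if item $j$ is allocated to agent $i$, and then it contributes at most $\frac{1-q}{p}+\delta$. Hence, conditioned on $h$, the welfare of high-value items is at most $\left(\frac{1-q}{p}+\delta\right)$ times the expected number of captured high slots. It therefore suffices to show
\[
\E[\#\text{captured high slots}\mid h]\le \tfrac34 h,
\]
since $\tfrac34 h\cdot\frac{1-q}{p}+\tfrac34 h\,\delta\le \tfrac34\frac{1-q}{p}h+\delta h$.

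Partition the $2s$ slots into four classes according to the endowment (agent $1$ holds $S$ with $|S|=\ell$, agent $2$ holds $T$ with $|T|=k$):
\begin{itemize}
\item $A=\{1\}\times S$ and $D=\{2\}\times T$, the ``own-endowment'' slots;
\item $B=\{1\}\times T$ and $C=\{2\}\times S$, the ``other's-endowment'' slots.
\end{itemize}
The exchange mechanism outputs either $(S,T)$ or $(T,S)$. Without trade it captures exactly $A\cup D$; with trade it captures exactly $B\cup C$.

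Conditioned on $h$, all draws are exchangeable, so the set of high slots is a uniformly random $h$-subset of the $2s$ slots. By linearity it is enough to tag one high slot $\sigma$ (uniform over slots) and show $\Pr[\sigma \text{ captured}\mid h]\le 3/4$.

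\textbf{Bounding the capture probability of a tagged slot.} Let $\sigma$ belong to agent $1$; agent $2$ is symmetric after swapping the roles of $S$ and $T$.
\begin{itemize}
\item If $\sigma\in A$, it is captured iff there is no trade. I will bound this trivially by $1$.
\item If $\sigma\in B$, it is captured only if there is a trade, which requires agent $2$ to strictly prefer $S$ to $T$. Agent $2$'s preference depends only on agent $2$'s values. Conditioned on $h$ and on $\sigma$ being agent $1$'s high slot in $T$, the remaining $h-1$ high draws are uniform over the other slots. So agent $2$'s preference is (nearly) independent of where agent $1$'s tagged draw sits, and I would bound $\Pr[\text{agent } 2 \text{ prefers } S]$ by a quantity $x$.
\end{itemize}
Doing the same for agent $2$'s tagged slots, with $y$ bounding $\Pr[\text{agent } 1 \text{ prefers } T]$, gives
\[
\Pr[\sigma \text{ captured}]\le \tfrac12\left(1+\frac{kx+\ell y}{s}\right).
\]
This would need $kx+\ell y\le s/2$, which is not automatic.

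To close the gap I would instead pair, for each item $j$, its two slots $(1,j)$ and $(2,j)$. Exactly one of them is captured, so capture of a high slot in $A\cup D$ is charged against the trade probability and capture in $B\cup C$ against the no-trade probability. I would then use the symmetry of the conditional law under swapping agents together with swapping $S\leftrightarrow T$. The goal is to show that
\[
\Pr[\text{no trade}\mid \sigma\in A\cup D]+\Pr[\text{trade}\mid \sigma\in B\cup C]\le \tfrac32 .
\]
The intended intuition is that a high draw in the other agent's endowment makes its owner eager to trade, but trade still needs the other agent's independent consent. That consent is a coin the owner cannot influence, and it costs, on average, half of the misplaced high draws.

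\textbf{Main obstacle.} The hard step is making the independence/symmetry argument exact under the conditioning on $h$, which introduces mild negative dependence between the two agents' high-draw counts. It must also hold uniformly over all $\ell,k$ with $s\ge3$. I would first try an explicit coupling that swaps agent $1$'s and agent $2$'s value vectors and reflects $S\leftrightarrow T$. This map preserves the conditional law and sends the trade event to itself, which should let me average the two cases so that the worst case yields exactly $\tfrac12\cdot 1+\tfrac12\cdot\tfrac12=\tfrac34$. If the exact coupling fails, I would fall back on explicitly computing the conditional preference probabilities for small $h$ and bounding large $h$ separately.
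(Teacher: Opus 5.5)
Your reduction to per-slot capture probabilities is correct, and so is the observation that the outcome captures exactly $A\cup D$ or exactly $B\cup C$. The gap is that the proposal never proves the inequality the lemma depends on. You leave the bound $\Pr[\sigma\text{ captured}]\le\frac12\bigl(1+\frac{kx+\ell y}{s}\bigr)$ open. Your substitute target $\Pr[\text{trade}\mid\sigma\in B\cup C]\le\frac12$ relies on a coupling that cannot deliver it. Swapping the agents' value vectors while reflecting $S\leftrightarrow T$ needs a bijection between $S$ and $T$, so it exists only when $\ell=k$. Even then it maps $B$ to $C$ and $A$ to $D$, so it shows only $\Pr[\text{trade}\mid\sigma\in B]=\Pr[\text{trade}\mid\sigma\in C]$, not that either is at most $\frac12$. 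When $\ell\neq k$ the other agent's consent is not a fair coin: an agent whose values are all non-high prefers the larger bundle with probability at least $\frac12$. So the ``consent costs half'' heuristic fails slot by slot; the bias is offset only after weighting by $|B|=k$ and $|C|=\ell$. Your fallback (compute small $h$, bound large $h$ separately) does not say how large $h$ is handled, and that is the hard part of your route.

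The missing idea is that for $h\ge 2$ no incentive argument is needed. Since the outcome captures all of $A\cup D$ or all of $B\cup C$, the number of captured high slots is at most $\max(r,h-r)=\frac h2+\bigl|r-\frac h2\bigr|$. Here $r$, the number of high slots in $A\cup D$, is hypergeometric with parameters $(2s,s,h)$, so it has mean $h/2$ and variance at most $h/4$. For $h\ge 4$, Jensen gives $\E\bigl|r-\frac h2\bigr|\le\sqrt h/2\le h/4$. For $h=2,3$, a direct computation (or convex domination by $\mathrm{Bin}(h,\frac12)$) gives at most $h/4$. This is the paper's argument.

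Only $h=1$ needs the mechanism's structure, and there your first bound already closes. Let $\tau$ be the probability that $\ell$ non-high draws sum to more than $k$ non-high draws. Then $x=\tau$ and $y=1-\tau$, and $k\tau+\ell(1-\tau)-\frac s2=-(\ell-k)\bigl(\tau-\frac12\bigr)\le 0$, because the larger bundle stochastically dominates. The paper phrases the same computation as the complementary non-capture probability $\frac{\ell\tau+k(1-\tau)}{2s}\ge\frac14$.
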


Using that $\E[h] = 2sp$, this gives that the expected welfare contribution of high-value items is at most
\[
    \E\left[\frac{3}{4}  \, \frac{1-q}{p} \, h + \delta h \right] = \left(\frac{3}{4} \,  \frac{1-q}{p} + \delta \right) (2sp)
    = \frac{3}{2}(1-q)s + 2 p \delta s.
\]

\paragraph*{Combining the Bounds.} The total expected welfare is bounded by the sum of the contributions from non-high-value and high-value items, which from above is upper bounded by:

\[
\left( sq + \sqrt{\frac{sq(1-q)}{2}} + \delta s \right) + \left( \frac{3}{2}(1-q)s + 2 p \delta s \right) = \left( \frac{3-q}{2} \right)s + \sqrt{\frac{sq(1-q)}{2}} + s \cdot \psi(\delta, p),
\]
where $\psi$ goes to $0$ as $\delta$ and $p$ go to $0$. 
Our goal is to show that for $s \ge 3$, this is at most $s(\sqrt{2} +  \psi(\delta, p))$. In other words, we will show that:
\[
\left( \frac{3-q}{2} \right)s + \sqrt{\frac{sq(1-q)}{2}} \le \sqrt{2} s.
\]
Rearranging and plugging in $q = 2-\sqrt{2}$, it suffices to show that
\[
\sqrt{\frac{s(3\sqrt{2}-4)}{2}} \le \left( \frac{\sqrt{2}-1}{2} \right)s.
\]
Isolating $s$ (and using that $s > 0$), we get that the above is equivalent to
\[\frac{2(3\sqrt{2}-4)}{(\sqrt{2}-1)^2} \le  s. \]
The LHS simplifies to $2\sqrt{2} \le 2.83$. Therefore, for any $s \ge 3 > 2\sqrt{2}$, the desired inequality holds, implying that the total expected welfare of all items from this component is upper bounded by
\[
s(\sqrt{2} +  \psi(\delta, p)).
\]
This completes the proof of Equation \eqref{eq:to_show} for exchange components when $s \ge 3$.

\paragraph{Small Exchanges ($s=2$).} The only non-degenerate case left is a 1-for-1 trade ($\ell=1, k=1$).

Consider the expected value of agent $1$ for the item they receive. Their expected value for their endowed item is $1+\delta/2$ (by~\Cref{eq:exp_value_single_draw}). Their expected gain from trade, if allowed by agent $2$, is $\E[\max(v_{1, 2} - v_{1, 1}, 0)] = \frac{\E[v_{1, 2} - v_{1, 1}] + \E[|v_{1, 2} - v_{1, 1}|]}{2} =  \frac{\E[|v_{1, 2} - v_{1, 1}|]}{2}$. Note that agent $2$ is willing to trade with probability $1/2$, independently of agent $1$'s value. Thus, the expected welfare gain for agent $1$ from trading is $ \frac{\E[|v_{1, 2} - v_{1, 1}|]}{4}$. We can compute the expected absolute gap between two draws from the distribution $\mathcal{D}$ as
\begin{align*}
    \E[|v_{1, 2} - v_{1, 1}|] &\le  2 p q \left(\frac{1 - q}{p} - 1 \right) + 2p(1 - p - q) \left(\frac{1 - q}{p} \right) + 2q(1 - p - q) \cdot 1  + \delta \\
    &= 2 - 2q^2 - 2p - 2pq + \delta \\
    &\leq 2(1 - q^2)+ \delta .
\end{align*}
The expected welfare for agent $1$'s item in this component is therefore upper bounded by  
\begin{align*}
    1 + \frac{\delta}{2} + \frac{1}{4}(2(1 - q^2) + \delta) &\le \frac{3 - q^2}{2} + \delta \\
    &= \frac{3 - (2 - \sqrt{2})^2}{2}+ \delta \\
    &= \frac{4\sqrt{2} - 3}{2} + \delta\\
    &= \sqrt{2} + \frac{2\sqrt{2} - 3}{2}+ \delta \\
    &\le \sqrt{2} + \delta.
\end{align*}

This shows that agent 1's expected welfare from the exchange is at most $\sqrt{2} + \delta$. By symmetry, the same holds for agent 2, yielding a $\sqrt{2} + \delta$ per-item bound on the final expected welfare for this component. This shows Equation \eqref{eq:to_show} holds for exchange components when $s = 2$.

Therefore, we have shown that Equation \eqref{eq:to_show} holds for all $s$ and all exchange mechanisms and for all picking mechanisms, and therefore it must hold for all components of any DSIC mechanism. This completes the proof of Equation \eqref{eq:overall_to_show}, and therefore the proof of~\Cref{thm:neg-2}.
\end{proof}

\paragraph{A Matching Bound.}
Strikingly, the upper bound of~\Cref{thm:neg-2} is tight. We can match the $\approx 0.854$ ratio using a remarkably simple class of rules we call \emph{Pick-$r$} mechanisms. For $r \in [0,1]$, a Pick-$r$ mechanism for two agents grants agent 1 their $\lfloor rm \rfloor$ most valuable items, leaving the rest for agent 2. This is simply a picking mechanism where the ``menu" consists of all subsets of a fixed cardinality; therefore, it is inherently DSIC.
\begin{proposition}\label{thm:pos-welf-2}
For two agents and any valid distribution $\mathcal{D}$, the Pick-$r$ mechanism, for $r = (\frac{2 - \sqrt{2}}{2})$, is DSIC and achieves a $\frac{2 + \sqrt{2}}{4} - o(1)$ approximation to welfare.
\end{proposition}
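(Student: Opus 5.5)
The plan is to compute both sides of the approximation ratio as integrals against the quantile function of $\D$, reduce the resulting linear inequality to threshold (step) quantile functions, and check it directly there. DSIC is immediate: Pick-$r$ is a picking mechanism whose menu is all subsets of size $\lfloor rm\rfloor$, and agent 1 simply picks its favorite. So the work is the welfare bound.

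\textbf{Step 1: express both sides through the quantile function.} Let $Q(u)$ be the value of $\D$ at top quantile $u$, i.e.\ $Q(u) = F^{-1}(1-u)$. It is nonincreasing and nonnegative on $[0,1]$, with $\mu = \E[v] = \int_0^1 Q(u)\,du$.

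The benchmark is a per-item computation. The top quantile of $\max(v_1,v_2)$ has density $2(1-u)$, so
\[
\E[\text{OPT}] = m\int_0^1 2(1-u)\,Q(u)\,du .
\]

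For the mechanism, agent 2's values are independent of agent 1's choice. Hence agent 2 contributes exactly $(m-\lfloor rm\rfloor)\mu$ in expectation.

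Agent 1 receives the sum of its top $k=\lfloor rm\rfloor$ values out of $m$ i.i.d.\ draws. I claim this is at least $m\int_0^{r} Q(u)\,du - O(\sqrt{m})$. To see this, compare with the rule that selects every item whose quantile is at most $k/m$. That rule has expected value exactly $m\int_0^{k/m}Q$. The selected count is $\mathrm{Bin}(m,k/m)$, which deviates from $k$ by $O(\sqrt m)$ in expectation, and values lie in $[0,1]$. Top-$k$ is optimal among size-$k$ sets, so padding or trimming the selection to size $k$ costs only $O(\sqrt m)$.

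Since $\D$ has nonzero variance, $\mu>0$, so $\E[\text{OPT}]=\Theta(m)$. All these lower-order losses, including the floor, therefore become an $o(1)$ loss in the ratio.

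\textbf{Step 2: the key inequality.} With $\alpha = \frac{2+\sqrt2}{4}$, it remains to show that for every nonincreasing $Q\ge 0$,
\[
\int_0^r Q(u)\,du + (1-r)\int_0^1 Q(u)\,du \;\ge\; \alpha\int_0^1 2(1-u)Q(u)\,du .
\]
Both sides are linear in $Q$. Every nonincreasing nonnegative $Q$ is a nonnegative mixture of step functions $\mathbf 1[u\le t]$ for $t\in[0,1]$, so it suffices to check these.

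For $Q=\mathbf 1[u\le t]$, the inequality reads $\min(t,r)+(1-r)t \ge \alpha(2t-t^2)$. There are two cases.

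\emph{Case $t\le r$.} We need $2-r\ge \alpha(2-t)$. The worst case is $t=0$, which requires $2-r\ge 2\alpha$. With $r=1-\tfrac1{\sqrt2}$ this holds with equality, since $2-r = 1+\tfrac1{\sqrt2} = 2\alpha$.

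\emph{Case $t\ge r$.} Define $f(t)=r+(1-r)t-2\alpha t+\alpha t^2$. Using $(1-r)-2\alpha=-1$, this simplifies to $f(t)=r-t+\alpha t^2$. This convex quadratic is minimized at $t^*=\tfrac{1}{2\alpha}=2-\sqrt2\ge r$, with minimum value
\[
f(t^*) = r-\tfrac{1}{4\alpha} = 1-\tfrac1{\sqrt2}-\tfrac{2-\sqrt2}{2} = 0 .
\]
So $f\ge 0$ on this range. This also shows the choice of $r$ is exactly what balances the two cases, matching \Cref{thm:neg-2}.

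\textbf{Main obstacle.} The step needing the most care is Step 1's lower bound on agent 1's expected top-$k$ sum. One has to make sure that $\lfloor rm\rfloor$ and binomial fluctuations contribute only $O(\sqrt m)$ additively. The reduction to step functions in Step 2 is clean once the integral representation is in place.
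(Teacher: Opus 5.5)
Your proposal is correct and is essentially the paper's argument. Writing both sides against the quantile function and checking the inequality on step functions $\mathbf{1}[u\le t]$ is exactly the paper's check that $\int_t^1 g(x)\,dx \ge \frac{2+\sqrt2}{4}\int_t^1 2x\,dx$ for all $t$, under the change of variables $u=1-x$. The only difference is the finite-$m$ step: you bound the error directly (agent 2 contributes exactly $(m-\lfloor rm\rfloor)\mu$, and agent 1's top-$k$ sum is within $O(\sqrt m)$ of the threshold rule) instead of going through the Pick-$\mathbf r$/QT-$\mathbf s$ coupling of \Cref{lem:equiv}, which makes your version self-contained for $n=2$.
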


This establishes that simple picking mechanisms are, in a sense, the ``optimal'' DSIC mechanisms for two agents under i.i.d. valuations.

The proof of \Cref{thm:pos-welf-2} relies on the key observation that for a large number of items, a Pick-$r$ mechanism for two agents is approximately equivalent to the first agent taking any items for which its corresponding quantile (with respect to the underlying distribution) is at least $1 - r$. Therefore, rather than considering each distribution separately, we can instead analyze a Pick-$r$ mechanism via the quantile of the value of the agent who is assigned the item. Concretely, consider any one item. In the optimal allocation, the item is allocated to the agent with the highest value; therefore, the corresponding quantile is the maximum of two Uniform $[0, 1]$ random variables. To bound our approximation ratio, we compare this with the equivalent quantile for a Pick-$r$ mechanism.

The fact that Pick-$r$ mechanisms are approximately equivalent to this quantile version will be useful here, and especially when generalizing to $n > 2$ agents. We first formalize this connection.

\begin{definition}[\emph{Pick-$\mathbf{r}$} mechanisms]\label{dfn: pick r}
Given a vector $ \mathbf{r} = (r_1, r_2, \ldots, r_n)$ such that (i) $r_i \ge 0$ for all $i$, and (ii) $\sum_{i=1}^n r_i = 1$, let $t_i =  \lfloor m \cdot \sum_{j=1}^i r_j \rfloor - \lfloor m \cdot \sum_{j=1}^{i-1} r_j \rfloor$. The \emph{Pick-$\mathbf{r}$} mechanism is a \emph{quota serial dictatorship}~\cite{papai2000strategyproof} where agent $1$ selects their favorite $t_1$ items, then agent $2$ selects their favorite $t_2$ items from the remainder, and so on. (Note that this choice of $t_i$ ensures that $t_i \in \{\floor{r_i m}, \ceil{r_im} \}$ and $\sum_i t_i = m$.)    
\end{definition}

\begin{definition}[QT-$\mathbf{s}$ mechanism]\label{dfn: qts}
For a vector $\mathbf{s} = (s_1, \ldots, s_n)$ with $s_n = 1$, we define the quantile-threshold-$\mathbf{s}$ (QT-$\mathbf{s}$) mechanism as follows: For each item value $v_{i,j}$, we determine its quantile $q_{i,j} \in [0, 1]$.\footnote{Recall that we assumed that distributions are continuous, so this is well-defined.} The mechanism iterates through agents $1, \ldots, n$; item $j$ is allocated to the first agent $i$ such that $q_{i,j} \ge 1 - s_i$.
\end{definition}

We now show that these two mechanisms produce nearly identical allocations, and consequently, have similar guarantees. 
\begin{lemma}\label{lem:equiv}
    Fix a vector $\mathbf{r} = (r_1, \ldots, r_n)$ such that (i) $r_i \ge 0$ for all $i$, and (ii) $\sum_i r_i = 1$, and let $\mathbf{s} = (s_1, \ldots, s_n)$ be such that
    \[
        s_i = \frac{r_i}{1 - \sum_{k = 1}^{i - 1} r_k}.
    \]
    We have that
    \begin{enumerate}
        \item With probability $1 - O(n/m^2)$, the allocations of $\text{Pick-}\mathbf{r}$ and $\text{QT-}\mathbf{s}$ differ on at most $O(n^2 \sqrt{m \log m})$ items.
    \end{enumerate}
    Furthermore, fix a valid distribution $\mathcal{D}$ with CDF $F$, and let $X_1, \ldots,X_n$ and $X$ all be i.i.d.\@ draws from $\D$. 
    Let $\tau_i = 1 - s_i$.
    \begin{enumerate}[resume]
    \item $\text{Pick-}\mathbf{r}$ achieves an $\alpha - o(1)$ approximation to the optimal expected welfare, where
$$\alpha = \frac{\sum_{i = 1}^n r_i \cdot \E[X \mid X \ge F^{-1}(\tau_i)]}{\E[\max_i X_i]}.$$
\item$\text{Pick-}\mathbf{r}$ is Envy-Free (EF) with high probability if for all agents $i$ and $j$, if $i < j$, 
$$r_i \E[X \mid X \ge F^{-1}(\tau_i)] > r_j \E[X \mid  X \le F^{-1}(\tau_i)],$$ 
and if $i > j$, 
$$r_i\E[X \mid X \ge F^{-1}(\tau_i)] > r_j \E[X].$$
\end{enumerate}
\end{lemma}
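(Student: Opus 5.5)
The plan is to prove part (1) by induction over agents via a coupling, using concentration of empirical quantiles, and then derive (2) and (3) from (1) by computing the corresponding quantities for QT-$\mathbf{s}$, which are sums of i.i.d.\ per-item terms.

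\textbf{Part (1).} Work in quantile space: replace each $v_{i,j}$ by $q_{i,j}=F(v_{i,j})$, so the $q_{i,j}$ are i.i.d.\ Uniform$[0,1]$. Since $F$ is monotone, both mechanisms depend only on these quantiles. Let $R_i$ be the items remaining before agent $i$ picks in Pick-$\mathbf{r}$, and $R_i'$ the items not yet allocated by QT-$\mathbf{s}$ to agents $1,\dots,i-1$.

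In QT-$\mathbf{s}$, the set $R_i'$ depends only on agents $1,\dots,i-1$, so agent $i$'s quantiles on $R_i'$ are fresh uniforms. Hence agent $i$ takes a $\text{Bin}(|R_i'|,s_i)$ number of items. By Chernoff and a union bound, with probability $1-O(n/m^2)$,
\[
\bigl|\,|R_i'| - m\bigl(1-\textstyle\sum_{k<i} r_k\bigr)\bigr| = O\bigl(i\sqrt{m\log m}\bigr) \quad\text{for all } i.
\]
This uses the choice of $s_i$: in expectation $s_i(1-\sum_{k<i}r_k)m = r_i m$.

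Now induct. Suppose $|R_i \,\triangle\, R_i'| \le D_i$. Pick-$\mathbf{r}$ gives agent $i$ its top $t_i$ items of $R_i$; equivalently, it takes those items whose quantile exceeds an empirical threshold $\hat\tau_i$. QT-$\mathbf{s}$ uses the threshold $\tau_i$ on $R_i'$. Two sources of disagreement arise:
\begin{itemize}
\item items in $R_i \triangle R_i'$, at most $D_i$ of them;
\item items of $R_i\cap R_i'$ whose quantile lies between $\hat\tau_i$ and $\tau_i$.
\end{itemize}
Because $|R_i|$ and $t_i$ match $|R_i'|$ and its expected count up to $O(D_i+\sqrt{m\log m})$, the empirical quantile satisfies $|\hat\tau_i-\tau_i| = O\bigl((D_i+\sqrt{m\log m})/m\bigr)$.

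The key subtlety is dependence: $R_i$ depends on agents before $i$, but agent $i$'s quantiles are independent of $R_i$. So I would apply DKW, or Chernoff over a grid of thresholds, to agent $i$'s quantiles on $R_i$ conditionally on $R_i$. This shows the number of items with quantile in the gap is $O(D_i+\sqrt{m\log m})$. The recursion gives $D_{i+1} \le O(D_i) + O(\sqrt{m\log m})$.

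I expect this step to be the main obstacle. A naive recursion with constant factor greater than $1$ blows up exponentially in $n$, which would not give the claimed $O(n^2\sqrt{m\log m})$. The fix I would try is additive tracking. Compare both mechanisms against the deterministic ``ideal'' sizes $m(1-\sum_{k<i}r_k)$, showing each stays within $O(i\sqrt{m\log m})$ of them. The per-step disagreement is then at most the sum of these deviations, $O(n\sqrt{m\log m})$. Summing over $n$ agents gives $O(n^2\sqrt{m\log m})$.

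\textbf{Part (2).} Welfare under QT-$\mathbf{s}$ is a sum over items. An item reaches agent $i$ with probability $\prod_{k<i}(1-s_k)$, and agent $i$ then takes it with probability $s_i$. The product $s_i\prod_{k<i}(1-s_k)$ telescopes to $r_i$. Conditional on being taken, the value is $X$ given $X\ge F^{-1}(\tau_i)$, by independence of agent $i$'s value from earlier agents. So the expected QT welfare is
\[
m\sum_i r_i\,\E\bigl[X\mid X\ge F^{-1}(\tau_i)\bigr].
\]
The optimum is $m\,\E[\max_i X_i]$. The two mechanisms differ on $O(n^2\sqrt{m\log m})$ items with probability $1-O(n/m^2)$, and otherwise on at most $m$ items. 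Values are bounded in $[0,1]$, so the welfare difference is $o(m)$ in expectation. Dividing by $m\,\E[\max_i X_i]=\Theta(m)$ (positive by nonzero variance) yields $\alpha-o(1)$.

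\textbf{Part (3).} By part (1), utilities under Pick-$\mathbf{r}$ and QT-$\mathbf{s}$ differ by $o(m)$ w.h.p., so it suffices to show QT-$\mathbf{s}$ is envy-free with margin $\Omega(m)$.

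Under QT-$\mathbf{s}$, agent $i$'s value for its own bundle is a sum of $m$ i.i.d.\ bounded terms with mean $r_i\,\E[X\mid X\ge F^{-1}(\tau_i)]$. Agent $i$'s value for agent $j$'s bundle has mean $r_j\,\E[X\mid X\le F^{-1}(\tau_i)]$ when $i<j$: items reaching $j$ were rejected by $i$, so $i$'s quantile is below $\tau_i$, and $j$'s acceptance is independent of $i$'s value. When $i>j$, agent $j$'s decision does not depend on agent $i$'s values, so the mean is $r_j\,\E[X]$.

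Hoeffding concentrates each of these sums within $O(\sqrt{m\log m})$ of its mean. The strict inequalities in the hypothesis give a constant gap per item, hence an $\Omega(m)$ gap overall. A union bound over the $n^2$ pairs finishes the argument.
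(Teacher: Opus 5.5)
Your parts (2) and (3) essentially follow the paper: in QT-$\mathbf{s}$ each item goes to agent $i$ with probability $r_i$, you compute the three per-item means, apply Hoeffding, and transfer to Pick-$\mathbf{r}$ via part (1).

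\textbf{The gap is in part (1), at the step you flag yourself.} Your recursion ties the error of the empirical threshold $\hat\tau_i$ to $D_i$. That gives $D_{i+1}\le O(D_i)+O(\sqrt{m\log m})$, hence only an $e^{O(n)}\sqrt{m\log m}$ bound. Your proposed repair does not close this. Knowing that $|R_i|$ and $|R_i'|$ are both close to $m(1-\sum_{k<i}r_k)$ says nothing about $|R_i\triangle R_i'|$, since two sets of equal size can be disjoint. So the claim that ``the per-step disagreement is at most the sum of these deviations'' is unsupported. What is missing is a way to turn closeness of \emph{sizes} into closeness of \emph{sets} without a multiplicative loss.

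\textbf{How the paper closes it.} After one probabilistic step, the argument is deterministic. The paper conditions only on $\bigl||A_i'|-r_im\bigr|\le C-1$ for all $i$ (Hoeffding, since $|A_i'|\sim\mathrm{Bin}(m,r_i)$), together with $\bigl||A_i|-r_im\bigr|\le 1$. On the common available set $R_i\cap R_i'$, both mechanisms pick a top set with respect to agent $i$'s quantiles. The two selections are therefore nested, and they differ on exactly $|y_i-y_i'|$ items, where $y_i,y_i'$ count common items taken by each mechanism. The size bound gives $|y_i-y_i'|\le C+|x_i-x_i'|$, where $x_i,x_i'$ count picks from outside the common set. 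The $x_i+x_i'$ old disagreements are used up at this step, so $D_{i+1}=D_i-(x_i+x_i')+|y_i-y_i'|\le D_i+C$ and $|A_i\triangle A_i'|\le 2D_i+C$. Summing over agents gives $n^2C$.

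\textbf{How to rescue your own route.} In Pick-$\mathbf{r}$ the size $|R_i|=m-\sum_{k<i}t_k$ is deterministic, and agent $i$'s quantiles on $R_i$ are fresh uniforms. So $\hat\tau_i$ is the $t_i$-th largest of $|R_i|$ i.i.d.\ uniforms, and $|\hat\tau_i-\tau_i|=O(\sqrt{\log m/|R_i|})$ regardless of $D_i$. DKW then bounds the new disagreements at step $i$ by $O(\sqrt{m\log m})$, which makes the recursion additive.

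For fixed $n$, your exponential bound would already suffice for (2) and (3). The gap concerns only the rate claimed in (1).
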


The proof of \Cref{lem:equiv} can be found in Appendix \ref{sec:equiv}. Using \Cref{lem:equiv}, we are now ready to prove \Cref{thm:pos-welf-2}.

\begin{proof}[Proof of \Cref{thm:pos-welf-2} ]
    We directly analyze the corresponding QT mechanism from \Cref{lem:equiv}. Specifically, Pick-$r$, for $r=\tfrac{2-\sqrt{2}}{2}$, behaves near-identically to QT-$(\tfrac{2-\sqrt{2}}{2},1)$ (for a large number of items). Let $\tau_1 = 1-\tfrac{2-\sqrt{2}}{2} = \frac{\sqrt{2}}{2}$ and $\tau_2 = 0$.
    From part (2) of~\Cref{lem:equiv} we have that $\text{Pick-}\mathbf{r}$ achieves an $\alpha - o(1)$ approximation to the optimal expected welfare, where
\begin{align*}
    \alpha &= \frac{\sum_{i = 1}^n r_i \cdot \E[X \mid X \ge F^{-1}(\tau_i)]}{\E[\max_i X_i]} \\
    &= \frac{ \tfrac{2-\sqrt{2}}{2} \cdot \E[X \mid X \ge F^{-1}(\tfrac{\sqrt{2}}{2})] + (1 -\tfrac{2-\sqrt{2}}{2}) \cdot \E[X \mid X \ge F^{-1}(0)]}{\E[\max \{ X_1, X_2 \}]} \\
    &= \frac{ \tfrac{2-\sqrt{2}}{2} \cdot \E[X \mid X \ge F^{-1}(\tfrac{\sqrt{2}}{2})] + \tfrac{\sqrt{2}}{2} \cdot \E[X]}{\E[\max \{ X_1, X_2 \}]}\\
    &= \frac{ r \cdot \E[X \mid X \ge F^{-1}(1-r)] + (1-r)\cdot \E[X]}{\E[\max \{ X_1, X_2 \}]}
\end{align*}
    For the numerator we have that $r \cdot \E[X \mid X \ge F^{-1}(1-r)] = \E[X \cdot 1\{ X \geq F^{-1}(1-r) \}] =  \int_{1-r}^1 F^{-1}(x) dx$. We also have that $\E[X] = \int_{0}^{1} F^{-1}(x) dx$. Therefore, the numerator becomes 
\begin{align*}
  \int_{1-r}^1 F^{-1}(x) dx + (1-r) \int_{0}^{1} F^{-1}(x) dx &= \int_{1-r}^1 F^{-1}(x) dx + (1-r) \left( \int_{0}^{1-r} F^{-1}(x) dx + \int_{1-r}^{1} F^{-1}(x) dx \right) \\
  &= (1-r) \int_{0}^{1-r} F^{-1}(x) dx + (2-r) \int_{1-r}^1 F^{-1}(x) dx \\
  & = \int_{0}^1 g(x) F^{-1}(x) dx,
\end{align*}
where $g(x) = 1-r$ for $x < 1-r$, and $g(x) = 2-r$ for $x \geq 1-r$. Similarly, the denominator is equal to $\int_{0}^1 2x F^{-1}(x) dx$. Therefore, we have that
\[
\alpha = \frac{\int_{0}^1 g(x) F^{-1}(x) dx}{\int_{0}^1 2x F^{-1}(x) dx}.
\]
We will prove that $\int_{t}^1 g(x) dx \geq \frac{2+\sqrt{2}}{4} \cdot \int_{t}^1 2x dx$, for all $t \in [0,1]$, which implies that $\alpha \geq \frac{2+\sqrt{2}}{4}$ (since $F^{-1}(x)$ is a non-decreasing function), concluding the proof.
We have that $\int_{t}^1 2x dx = 1 - t^2$. For $t < 1-r$, $\int_{t}^1 g(x) dx = \int_{t}^{1-r} 1-r dx + \int_{1-r}^1 2-r dx = (1-r)(1-r-t) + (2-r)r = 1 - (1-r)t = 1 - \frac{\sqrt{2}}{2}t = \frac{2 - \sqrt{2} t}{2}$. It is easy to confirm that $\frac{2 - \sqrt{2} t}{2} \geq \frac{2+\sqrt{2}}{4} (1-t^2)$. For $t \geq 1-r$, $\int_{t}^1 g(x) dx = \int_{t}^{1} 2-r dx = (2-r)(1-t) = \frac{2+\sqrt{2}}{2}(1-t)$. It is easy to confirm that $\frac{2+\sqrt{2}}{2}(1-t) \geq \frac{2+\sqrt{2}}{4} (1-t^2)$. 

This concludes the proof.
\end{proof}

\paragraph{Incorporating Fairness.}

While \Cref{thm:pos-welf-2} provides a welfare approximation, it says nothing about fairness. Indeed, the parameter choice $r = \frac{2 - \sqrt{2}}{2} \approx 0.29$ implies that agent $2$ receives over 70\% of the items. Under distributions where items have similar values, agent $1$ will inevitably be envious. Furthermore, it is straightforward to show that varying $r$ in Pick-$r$ mechanisms to balance the allocation (and restore fairness) leads to suboptimal welfare guarantees.

One might therefore suspect a necessary trade-off between fairness and efficiency among DSIC mechanism. However, strikingly, we can incorporate envy-freeness at essentially no loss to the worst-case welfare guarantee. The key insight is that we don't need to restrict ourselves to a single picking mechanism for all distributions; instead, we show that for any distribution, there exists a choice of $r$ that simultaneously achieves both envy-freeness and optimal welfare guarantees.

\begin{theorem}\label{thm: main positive result for two agents}
    For two agents and any valid distribution $\mathcal{D}$, there exists an $r$ for which the Pick-$r$ mechanism is simultaneously DSIC, envy-free with high probability, and achieves a $\frac{2 + \sqrt{2}}{4} - o(1)$ approximation to welfare.
\end{theorem}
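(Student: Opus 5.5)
The plan is to work entirely through \Cref{lem:equiv} with $n=2$ and $\mathbf{r}=(r,1-r)$. Then $s_1=r$, $s_2=1$, $\tau_1=1-r$ and $\tau_2=0$, and Pick-$r$ is DSIC because it is a picking mechanism.

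\textbf{Translating the EF conditions.} Write $\Phi(r)=\int_{1-r}^1 F^{-1}(x)\,dx = r\,\E[X\mid X\ge F^{-1}(1-r)]$ and $\mu=\E[X]$. Part (3) of \Cref{lem:equiv} gives two conditions.
\begin{itemize}
\item For agent $1$ ($i=1<j=2$) it reads $\Phi(r) > (1-r)\E[X\mid X\le F^{-1}(1-r)] = \mu-\Phi(r)$, i.e.\ $\Phi(r)>\mu/2$.
\item For agent $2$ ($i=2>j=1$) it reads $(1-r)\mu > r\mu$, i.e.\ $r<1/2$.
\end{itemize}
Since $\Phi$ is continuous and increasing, let $r^*$ be the unique value with $\Phi(r^*)=\mu/2$. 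Because $F^{-1}$ is non-decreasing and $\D$ has nonzero variance, the top half of quantiles carries strictly more than half the mass, so $r^*<1/2$. Hence every $r\in(r^*,1/2)$ makes Pick-$r$ envy-free with high probability.

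\textbf{Choosing $r$.} Let $r_0=\frac{2-\sqrt2}{2}$.
\begin{itemize}
\item \emph{Case $r^*<r_0$.} Take $r=r_0$. Since $r_0<1/2$, this $r$ lies in $(r^*,1/2)$. \Cref{thm:pos-welf-2} already gives the $\frac{2+\sqrt2}{4}-o(1)$ welfare bound, so this case is done.
\item \emph{Case $r^*\ge r_0$.} Take $r=r^*+\eta$ for a small constant $\eta>0$ (constant in $m$, depending only on $\D$). By part (2) of \Cref{lem:equiv}, together with the simplification $N(r)=\Phi(r)+(1-r)\mu$ from the proof of \Cref{thm:pos-welf-2}, the ratio is $\alpha(r)=N(r)/\E[\max(X_1,X_2)]$. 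This is continuous in $r$. It therefore suffices to show $\alpha(r^*)\ge \frac{2+\sqrt2}{4}$ strictly, or to absorb an arbitrarily small loss into the $o(1)$ by letting $\eta\to 0$ slowly. (If only the non-strict bound $\ge$ holds, I would let $\eta$ shrink with $m$, taking care that the high-probability EF argument of \Cref{lem:equiv} still applies; I expect a fixed small $\eta$ together with strict slack to be cleaner.)
\end{itemize}

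\textbf{The main obstacle} is proving $\alpha(r^*)\ge c:=\frac{2+\sqrt2}{4}$ whenever $r^*\in[r_0,1/2)$. At $r^*$ we have $N(r^*)=\mu(\tfrac32-r^*)$, and the denominator is $\int_0^1 2xF^{-1}(x)\,dx$.

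The pointwise tail-dominance argument used for \Cref{thm:pos-welf-2} will not work directly, since a large $r$ alone is bad for worst-case $F$. The constraint $\Phi(r^*)=\mu/2$ rules out the bad distributions: it says the distribution is ``flat'', with the top $r^*\ge r_0$ fraction of quantiles holding only half the mass.

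I would handle this with a Lagrangian version of the same argument. Let $w(x)=\mathbf 1\{x\ge 1-r^*\}-\tfrac12$. The constraint is exactly $\int_0^1 w\,F^{-1}=0$, so for every $\lambda$,
\[
N(r^*)-c\int_0^1 2xF^{-1}=\int_0^1 \bigl(g_{r^*}(x)-2cx+\lambda w(x)\bigr)F^{-1}(x)\,dx .
\]
Because $F^{-1}$ is non-decreasing and nonnegative, it suffices to find $\lambda$ (depending on $r^*$) such that
\[
\int_t^1 \bigl(g_{r^*}(x)-2cx+\lambda w(x)\bigr)dx\ \ge\ 0 \quad \text{for all } t\in[0,1].
\]
Each tail integral is an explicit piecewise quadratic in $t$, with a breakpoint at $1-r^*$. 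I would therefore check the endpoints $t=0$ and $t=1-r^*$ and the interior minima on each piece, as functions of $r^*\in[r_0,1/2)$.

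My first attempt would be $\lambda$ chosen to make the $t=0$ tail exactly zero, then verifying the remaining inequalities. As a sanity check, equality should hold at $r^*=r_0$, where the bound reduces to \Cref{thm:pos-welf-2}. If the Lagrangian certificate fails for some range of $r^*$, the fallback is a linear-fractional/extreme-point argument: under the single linear constraint, the optimum is attained by mixtures of at most two threshold functions $\mathbf 1\{x\ge a\}$, which reduces the problem to a two-parameter calculus check.
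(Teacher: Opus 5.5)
Your route is correct, and in the second case it differs from the paper's. You and the paper both split on whether Pick-$r_0$, with $r_0=\frac{2-\sqrt2}{2}$, passes the EF test of \Cref{lem:equiv}(3) (equivalently, whether $r^*<r_0$), and you treat the passing case identically.

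When the test fails, the paper abandons $r_0$ and runs Pick-$(\tfrac12-m^{-1/4})$. It proves a $\tfrac78-o(1)$ welfare bound by adding $\tfrac14$ times the nonpositive quantity $\int_0^\infty\bigl(r_0-|F(s)-(1-r_0)|\bigr)\,ds$ and checking a pointwise inequality. It then has to argue envy-freeness directly with Hoeffding, because an $m$-dependent $r$ is not covered by \Cref{lem:equiv}(3).

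You instead take a constant $r$ just above the EF threshold $r^*$. Envy-freeness is then immediate from \Cref{lem:equiv}(3), and the equality $\Phi(r^*)=\mu/2$ drives the welfare analysis. Your welfare bound in this case is weaker: it tends to $\frac{2+\sqrt2}{4}$ as $r^*\downarrow r_0$, versus the paper's $\tfrac78$. That is all the theorem needs, and your mechanism and EF argument are cleaner.

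The one step of your sketch that would fail is the first choice of multiplier. Take $r^*=r_0$: making the $t=0$ tail vanish gives $\lambda=\frac{\sqrt2}{2}$. The integrand at $0$ is then $g_{r_0}(0)+\lambda w(0)=1-r_0-\tfrac{\lambda}{2}=\tfrac{\sqrt2}{4}>0$, so the tail integral is $0$ at $t=0$ and negative just to its right. The same failure occurs for every $r^*$ from $r_0$ up to about $0.38$.

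The multiplier is pinned down at the other end instead. For $t\ge 1-r^*$ the tail equals $(1-t)\bigl(\tfrac{\lambda}{2}-(r^*-r_0)\bigr)+c(1-t)^2$, which forces $\lambda\ge 2(r^*-r_0)$. The choice $\lambda=2(r^*-r_0)$ works. On $[0,1-r^*)$ the tail is then a convex quadratic with minimum $\frac{\sqrt2-1}{2}(\sqrt2-1-u)(\sqrt2-u)\ge 0$, where $u=1-2r^*\in(0,\sqrt2-1]$.

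It is even cleaner to certify the denominator alone. Take $k(x)=2(1-r^*)-2x+4r^*w(x)$. The tails $\int_t^1 k$ are exactly $(1-t)^2$ on $[1-r^*,1]$ and $\bigl(t-(1-2r^*)\bigr)^2$ on $[0,1-r^*)$. Hence
\[
\E[\max(X_1,X_2)]\le 2(1-r^*)\mu \quad\text{and}\quad \alpha(r^*)\ge \frac{3/2-r^*}{2(1-r^*)}=\frac12+\frac{1}{4(1-r^*)}\ge\frac{2+\sqrt2}{4},
\]
with equality in the last step only at $r^*=r_0$.

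Finally, since $\D$ is continuous, $F^{-1}$ is strictly increasing, so its Stieltjes measure charges every interval. These tails vanish only at isolated points, so $\alpha(r^*)>\frac{2+\sqrt2}{4}$ strictly for each valid $\D$. A fixed $\eta>0$ therefore suffices, and you do not need to let $\eta$ shrink with $m$.
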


The proof of \Cref{thm: main positive result for two agents} builds on the result of \Cref{thm:pos-welf-2}, which showed that the Pick-$r$ mechanism for $r = \frac{2-\sqrt{2}}{2}$ achieves the desired welfare approximation. The key idea for the proof of \Cref{thm: main positive result for two agents} is separate all possible underlying distribution in to two categories. The first category is the set of distributions for which the Pick-$\tfrac{2-\sqrt{2}}{2}$ is envy-free with high probability. For these distributions,  \Cref{thm:pos-welf-2} directly implies the desired result. The second category is the set of distributions for which the  Pick-$\tfrac{2-\sqrt{2}}{2}$ mechanism is not envy-free with high probability. Here, we show that for sufficiently small $\varepsilon > 0$, the Pick-$(1/2-\varepsilon)$ mechanism achieves at least a $7/8-o(1) \ge \tfrac{2+\sqrt{2}}{4} - o(1)$ approximation to welfare and is envy-free with high probability. Therefore, in this case, the Pick-$(1/2 - \varepsilon)$ satisfies the desired property. The proof of \Cref{thm: main positive result for two agents} can be found in Appendix \ref{sec: main positive result for two agents}.

\subsection{Connection to Prophet Inequalities}

For a reader that is well-versed in optimal stopping theory, the ratio $\frac{2 + \sqrt{2}}{4} \approx 0.854$ from \Cref{thm:pos-welf-2} may look familiar. This is no coincidence. It turns out that there is a tight connection between i.i.d.\@ prophet inequalities and the approximation ratios of a class Pick-$\mathbf{r}$ (or, more specifically, QT-$\mathbf{s}$ mechanisms).
To discuss this, we will need to formally introduce the concepts and give a brief background of the area.

\paragraph{Preliminaries on I.I.D. Prophets Inequalities.} In the i.i.d.\@ prophet setting, a sequence of $n$ random variables $X_1, \ldots, X_n$ are drawn i.i.d.\@ from a fixed, known distribution $\mathcal{D}$. An online algorithm observes the realized values $x_1, \ldots, x_n$ sequentially. Upon observing $x_i$, the algorithm must make an irrevocable decision to either accept $x_i$ and stop, or reject $x_i$ and continue to step $i+1$. An algorithm in this problem is characterized by a \emph{stopping time} $\tau$, a random variable taking values in $\{1, \ldots, n\}$ such that the event $\{\tau = i\}$ depends only on the history $X_1, \ldots, X_i$. The objective is to maximize the expected reward, $\E[X_\tau]$. The algorithm competes against a ``prophet'' who knows the realizations of all random variables in advance and selects the maximum. We say an algorithm achieves an $\alpha$-approximation (or has a competitive ratio of $\alpha$) if $\E[X_\tau] \ge \alpha \cdot \E[\max_i X_i]$.

This problem was first studied by \citet{hill1982comparisons} (while the non-i.i.d. version dates back to~\citet{krengel1978semiamarts}), who derived a recursive formula for the optimal $\alpha$ for any fixed $n$, denoted by $a_n$. For $n = 2$, this bound is $\frac{2 + \sqrt{2}}{4} \approx 0.854$.\footnote{Hill and Kertz originally analyzed the ratio of the prophet to the gambler ($1/\alpha$). We adopt the modern convention of the approximation ratio $\alpha \le 1$.} They further demonstrated that for all $n$, $a_n \ge 1 - 1/e \approx 0.632$. Subsequently, \cite{kertz1986stop} proved that as $n \to \infty$, $a_n$ converges to a constant $\beta \approx 0.745$. This constant is defined such that $1/\beta$ satisfies the integral equation:$$\int_0^1 \frac{1}{y(1 - \ln y) + \beta - 1} \, dy = 1.$$ While Kertz established $\beta$ as the asymptotic limit, he did not prove that the sequence of $a_n$s is monotone, leaving open the possibility that the bound for finite $n$ could be strictly worse than $\beta$. This gap was closed by \cite{correa2017posted}, who proved that $\beta \approx 0.745$ is a universal lower bound for all $n \ge 1$.

We can now state the equivalence between i.i.d. prophet inequalities and DSIC mechanisms in our setting. In particular, we will define the Quantile-threshold-prophet-$\mathbf{s}$ (QTP-$\mathbf{s}$) algorithm for the i.i.d. prophet inequality problem, which, upon observing value $x_i$, it accepts it if and only if $q_i \ge 1 - s_i$, where $q_i$ is the quantile of $x_i$ (i.e., $q_i = F(x_i)$).
We then have the following.

\begin{proposition}\label{thm:prophet-connection}
    Fix a valid distribution $\mathcal{D}$ and a number of agents $n$. Fix $\alpha \in [0, 1]$. The following statements hold.
    \begin{enumerate}
        \item For any $\mathbf{s}$, the QT-$\mathbf{s}$ algorithm (\Cref{dfn: qts}) achieves an $\alpha$-approximation to welfare iff the QTP-$\mathbf{s}$ algorithm achieves an $\alpha$-approximation to the reward of the prophet.
        \item There exists a prophet algorithm achieving an $\alpha$-approximation iff there exists a vector $\mathbf{r}$ for which Pick-$\mathbf{r}$ achieves an $\alpha$-approximation.
    \end{enumerate} 
\end{proposition}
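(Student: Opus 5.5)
Both parts reduce to one identity: the welfare of QT-$\mathbf{s}$ on a single item equals the expected reward of QTP-$\mathbf{s}$, and the benchmarks coincide.

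\textbf{Part (1).} Fix an item $j$. The values $v_{1,j},\dots,v_{n,j}$ are i.i.d.\ from $\D$. QT-$\mathbf{s}$ gives the item to the first agent $i$ with $q_{i,j}\ge 1-s_i$, where $s_n=1$ ensures someone receives it. Relabel $X_i=v_{i,j}$ and read the agents in order $1,\dots,n$. Then the value $v_{i^\ast,j}$ collected by QT-$\mathbf{s}$ is exactly $X_\tau$, where $\tau$ is the stopping time of QTP-$\mathbf{s}$. Hence $\E[\text{welfare of item } j]=\E[X_\tau]$ for every item.

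By additivity, the optimal welfare is $\sum_j \E[\max_i v_{i,j}] = m\,\E[\max_i X_i]$. The welfare of QT-$\mathbf{s}$ is likewise $m\,\E[X_\tau]$. The factor $m$ cancels, so QT-$\mathbf{s}$ is an $\alpha$-approximation iff QTP-$\mathbf{s}$ is, for every $m$.

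\textbf{Part (2), ($\Leftarrow$).} Suppose Pick-$\mathbf{r}$ achieves $\alpha$. Set $s_i=r_i/(1-\sum_{k<i}r_k)$ as in \Cref{lem:equiv}. By \Cref{lem:equiv}(1), Pick-$\mathbf{r}$ and QT-$\mathbf{s}$ differ on $O(n^2\sqrt{m\log m})=o(m)$ items with probability $1-O(n/m^2)$. Values are bounded, so their expected welfares differ by $o(m)$, while the optimum is $\Theta(m)$ since $\D$ has nonzero variance. So QT-$\mathbf{s}$ achieves $\alpha-o(1)$, and since its ratio does not depend on $m$ it achieves exactly $\alpha$. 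By Part (1), QTP-$\mathbf{s}$ is an $\alpha$-approximate prophet algorithm.

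The statement mixes exact and asymptotic guarantees, so I will interpret "achieves" for Pick-$\mathbf{r}$ as "achieves $\alpha-o(1)$", as in \Cref{lem:equiv}(2). I will state this convention explicitly.

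\textbf{Part (2), ($\Rightarrow$).} This is the main obstacle: reducing an arbitrary prophet algorithm to a quantile-threshold one. I will use the standard fact that the optimal i.i.d.\ stopping rule comes from backward induction and is a threshold rule. It accepts $x_i$ iff $x_i\ge c_i$, where $c_i$ is the optimal continuation value and $c_n=0$. Because $\D$ is continuous, $\{x_i\ge c_i\}=\{q_i\ge F(c_i)\}$, so this rule is QTP-$\mathbf{s}$ with $s_i=1-F(c_i)$ and $s_n=1$. The optimal algorithm is at least as good as any $\alpha$-approximate one.

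Next I invert the map $\mathbf{r}\mapsto\mathbf{s}$ by setting $r_i=s_i\prod_{k<i}(1-s_k)$. These satisfy $r_i\ge0$, and the telescoping product with $s_n=1$ gives $\sum_i r_i=1$. Applying Part (1) and then \Cref{lem:equiv}(2) shows Pick-$\mathbf{r}$ achieves $\alpha-o(1)$.

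The delicate points are the optimality of threshold rules and the handling of ties at $c_i$. Continuity of $\D$ makes ties a measure-zero event, so acceptance at equality is irrelevant.
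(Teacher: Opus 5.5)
Your proof is correct. Part (1) uses the same per-item coupling as the paper, but for the $\Rightarrow$ direction of Part (2) you take a different route.

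The paper takes an optimal prophet algorithm and records its marginal stopping probabilities $z_i$. It then argues by exchange that, with these marginals fixed, the best behavior at step $i$ is to accept exactly when the quantile of $X_i$ exceeds $1 - z_i/\Pr(\text{reach } i)$. This yields a QTP-$\mathbf{s}$ rule whose Pick-$\mathbf{r}$ vector is simply $r_i = z_i$, which already matches the formula in \Cref{lem:equiv}, so no inversion is needed.

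You instead invoke the backward-induction characterization: accept iff $x_i \ge c_i$, where $c_i$ is the continuation value. You then pass to quantile thresholds via continuity and invert $\mathbf{r} \mapsto \mathbf{s}$ explicitly through $r_i = s_i \prod_{k<i}(1-s_k)$.

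Each route buys something different:
\begin{itemize}
\item Your version rests on a standard, fully rigorous theorem. It also avoids a point the paper's informal exchange step glosses over: an adaptive algorithm's acceptance probability at step $i$ can depend on $X_1, \dots, X_{i-1}$. Replacing it by a history-independent threshold with the same marginal therefore requires a concavity (Jensen) argument for $a \mapsto \int_{1-a}^{1} F^{-1}(u)\,du$.
\item The paper's idea, once made rigorous, is more general. It converts \emph{any} stopping rule, not only an optimal one, into a QTP-$\mathbf{s}$ rule with the same stopping marginals and weakly larger reward.
\end{itemize}

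You are also more careful than the paper about the $\Leftarrow$ direction and the exact-versus-$o(1)$ convention. You correctly use that the QT-$\mathbf{s}$ ratio does not depend on $m$ to upgrade $\alpha - o(1)$ to $\alpha$.
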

\begin{proof}
    For $(1)$, fix an arbitrary $\mathbf{s}$. Consider a single item $j$ with quantiles $q_{i,j} = F(v_{i,j})$. By Definition~\ref{dfn: qts}, QT-$\mathbf{s}$ allocates $j$ to $\tau = \min\{ i \in [n] : q_{i,j} \ge 1-s_i \}$. The QTP-$\mathbf{s}$ stopping rule is exactly the same, hence on every realization QT-$\mathbf{s}$ and QTP-$\mathbf{s}$ select the same index $\tau$ and obtain the same realized value. Summing over items, the total welfare of QT-$\mathbf{s}$ equals the total reward of running QTP-$\mathbf{s}$ independently on each item (with the same $\mathbf{s}$), and therefore their expectations coincide. Finally, the offline benchmark in our allocation setting is $\sum_j \max_i v_{i,j}$ (each item assigned to the agent with the highest value), which is exactly the prophet benchmark for these per-item instances.

    For $(2)$, first note that by~\Cref{lem:equiv}, every Pick-$\mathbf{r}$ mechanism is equivalent to a QT-$\mathbf{s}$ mechanism (with respect to the welfare approximation, up to a $o(1)$ difference), for some threshold vector $\mathbf{s}$. Second, by part (1) of this proposition, the welfare of a QT-$\mathbf{s}$ mechanism is identical to the reward of a prophet algorithm QTP-$\mathbf{s}$. Therefore, it suffices to show that the optimal prophet algorithm must be a QTP-$\mathbf{s}$ algorithm.

    Consider an arbitrary optimal prophet algorithm, and let $z_i$ be the marginal probability that the algorithm stops at step $i$. To maximize the expected reward $\sum_i \E[X_i | \text{stop at i} ] \cdot z_i$, for these (fixed) marginal stopping probabilities, the optimal strategy at step $i$ (conditioned on reaching step $i$) must be to accept when $X_i$ takes a large value. This implies a threshold strategy where we accept $X_i$ if and only if it exceeds a certain quantile $s_i$, and specifically $s_i = z_i / Pr(\text{reach } i)$. This is a QTP-$\mathbf{s}$ strategy, and it achieves the same welfare as the original algorithm. Therefore, applying part (1) of this proposition and ~\Cref{lem:equiv}, we have a correspondence between optimal prophet algorithms and Pick-$\mathbf{r}$ mechanisms.
\end{proof}

\subsection{$n > 2$ Agents}\label{sec:n-agents}
We now turn to the case of $n > 2$ agents.
By importing results from the i.i.d.\@ prophet problem~\cite{correa2017posted}, we can ensure that, for any distribution $\mathcal{D}$ and any number of agents $n$, there exists a mechanism achieving a $\beta \approx 0.745$ approximation to welfare. 
\begin{corollary}\label{thm:correa-pos-n}
    For all $n$ and distributions $\mathcal{D}$, there exists $\mathbf{r} = (r_1, \ldots, r_{n-1})$ such the Pick-$\mathbf{r}$ mechanism achieves a $\beta - o(1) \approx 0.745$ approximation to welfare. Furthermore, there exists a valid distribution $\mathcal{D}$ such that no picking sequence can achieve a higher approximation to welfare. 
\end{corollary}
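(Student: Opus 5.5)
The plan is to derive both parts of \Cref{thm:correa-pos-n} from \Cref{thm:prophet-connection} together with the known i.i.d.\ prophet inequality results.

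\textbf{Positive part.} Fix $n$ and a valid $\mathcal{D}$ with CDF $F$. By \citet{correa2017posted}, for every $n$ and every distribution there is a stopping rule for the i.i.d.\ prophet problem with $n$ draws whose expected reward is at least $\beta\cdot\E[\max_i X_i]$. I would make this rule a quantile-threshold rule. As in the proof of part (2) of \Cref{thm:prophet-connection}, let $z_i$ be its marginal stopping probabilities. Replace the rule by the one that, conditioned on reaching step $i$, accepts exactly the top $s_i = z_i/\Pr[\text{reach } i]$ quantile of $X_i$. Reaching step $i$ depends only on $X_1,\ldots,X_{i-1}$, so it is independent of $X_i$. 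Hence this replacement keeps the stopping marginals and weakly increases the reward, giving a QTP-$\mathbf{s}$ rule. (Alternatively, Correa et al.'s rule is already quantile-based.) We may also take $s_n=1$, since accepting the last draw only helps. By part (1) of \Cref{thm:prophet-connection}, QT-$\mathbf{s}$ then has welfare at least $\beta\cdot\E[\sum_j\max_i v_{i,j}]$.

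Next, invert the map in \Cref{lem:equiv}: set $r_1=s_1$ and $r_i = s_i\prod_{k<i}(1-s_k)$. This gives $r_i\ge 0$ and $\sum_i r_i = 1 - \prod_k(1-s_k) = 1$ because $s_n=1$. It also gives $s_i = r_i/(1-\sum_{k<i}r_k)$ whenever the denominator is positive. If some $s_k=1$ with $k<n$, then all later $r_i$ are $0$, which is harmless. By part (2) of \Cref{lem:equiv}, Pick-$\mathbf{r}$ achieves $\alpha-o(1)$, where $\alpha=\sum_i r_i\E[X\mid X\ge F^{-1}(1-s_i)]/\E[\max_i X_i]$. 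The numerator is exactly the per-item expected welfare of QT-$\mathbf{s}$, since $r_i$ is the probability that the item goes to agent $i$. So $\alpha\ge\beta$. This bookkeeping, including the degenerate case where some $s_k=1$ early, is the only care needed here.

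\textbf{Negative part.} Here I use the tightness of $\beta$ as the asymptotic i.i.d.\ prophet constant \citep{kertz1986stop}. For every $\varepsilon>0$ there are $n$ and a distribution on which no stopping rule beats $a_n\le\beta+\varepsilon$. Since $a_n\to\beta$, I read ``no picking sequence achieves higher'' with an arbitrarily small slack. To handle continuity, perturb Kertz's worst-case distribution by tiny uniform noise, as in the proof of \Cref{thm:neg-2}. This changes both the prophet value and every algorithm's value by at most $\delta$, which keeps the distribution valid.

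Now take any picking sequence. The main obstacle is that a general picking sequence need not be a quota serial dictatorship with the fixed ordering $1,\ldots,n$. I would argue it is still of the Pick-$\mathbf{r}$ form after relabeling agents, and that interleaved picks can be grouped. Each agent's allocation under a sequence is determined by which picks are theirs. Asymptotically, by the same concentration behind \Cref{lem:equiv}, every picking sequence behaves like a per-item sequential quantile-threshold rule, possibly with revisiting. Per item, such a rule is a (possibly randomized over items) online stopping rule on i.i.d.\ values. By part (2) of \Cref{thm:prophet-connection} its ratio is at most $a_n+o(1)$. If the revisiting issue proves delicate, I would restrict the claim to Pick-$\mathbf{r}$ mechanisms, where \Cref{thm:prophet-connection}(2) applies directly: any Pick-$\mathbf{r}$ achieving more than $\beta+\varepsilon$ would yield a prophet algorithm beating $a_n$, a contradiction.
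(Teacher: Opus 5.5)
Your positive part follows the paper's argument. The paper combines \Cref{thm:prophet-connection} with the $\beta$ guarantee of \citet{correa2017posted}. Your inversion $r_i = s_i\prod_{k<i}(1-s_k)$, the handling of $s_k=1$ for $k<n$, and the observation that Correa et al.'s rule is already quantile-based fill in bookkeeping the paper leaves implicit. Your fallback for the negative part is also what the paper's one-line proof relies on: a Pick-$\mathbf{r}$ beating $\beta+\varepsilon$ would, via \Cref{lem:equiv} and \Cref{thm:prophet-connection}, yield a stopping rule beating $a_n$. You are right that for fixed $n$ the tight constant is $a_n>\beta$, so the ``furthermore'' only holds with arbitrarily small slack (large $n$). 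That is a fair correction of a point the paper glosses over.

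However, drop your primary argument for arbitrary interleaved picking sequences. It is not merely delicate; it is false. Interleaved picks cannot be grouped, and with revisiting the per-item rule compares all agents' quantiles at once, so it is not an online stopping rule. Round robin under truthful picking is a concrete counterexample. All agents lower their quantile thresholds at the same rate, so in the fluid limit each item goes to the agent with the highest quantile for it. Under i.i.d.\ values that is the welfare-maximizing allocation, so round robin gets a $1-o(1)$ approximation, well above $a_n$. Such sequences are not DSIC. The only sensible reading of ``picking sequence'' in the corollary is therefore a quota serial dictatorship, i.e.\ Pick-$\mathbf{r}$ up to relabeling agents (harmless by symmetry). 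For that class your fallback is the whole proof.
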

\begin{proof}
    By \Cref{thm:prophet-connection}, the welfare approximation of a Pick-$\mathbf{r}$ mechanism is equivalent to the competitive ratio of an i.i.d. prophet inequality algorithm. Since $\beta - o(1) \approx 0.745$ is the optimal bound for the i.i.d. prophet inequality problem~\cite{correa2017posted}, the existence of the corresponding Pick-$\mathbf{r}$ mechanism follows.
\end{proof}

\emph{Incorporating Fairness.}
The most natural approach to fairness in this framework is to equalize the number of items agents pick, i.e., to use the Pick-$(1/n, \ldots, 1/n)$ mechanism. This corresponds to a stopping rule studied in the prophet inequality literature ~\cite{arsenis2022individual}, albeit under a different notion of fairness. For this mechanism, we can prove the following guarantee: for all $n$, the Pick-$(1/n, \ldots, 1/n)$ mechanism is DSIC, envy-free in expectation, and achieves a $1/2$ approximation to welfare. By slightly modifying the number of items allocated to each agent, we are able to find a mechanism that can instead achieve envy-freeness with high probability.

\begin{theorem}\label{thm: n agents dsic 2 approx}
    For all $n$ and all valid distributions $\mathcal{D}$, for sufficiently small $\varepsilon$, the Pick-$\mathbf{r}$ mechanism with $\mathbf{r} = (1/n - \varepsilon, \cdots, 1/n - \varepsilon, 1/n + (n -1)\varepsilon)$ is envy-free with high probability and achieves a $1/2 - o(1)$ approximation to welfare. 
\end{theorem}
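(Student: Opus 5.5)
The plan is to reduce everything to \Cref{lem:equiv}. We verify its part (3) conditions for envy-freeness and bound the ratio $\alpha$ in its part (2) by $1/2$. Both are first done at $\varepsilon=0$ and then extended to small $\varepsilon>0$ by continuity.

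\textbf{Parameters.} First compute the thresholds. For $i<n$ we have $r_i = 1/n-\varepsilon$, and hence
\[
s_i = \frac{1/n-\varepsilon}{1-(i-1)(1/n-\varepsilon)}, \qquad \tau_i = 1-s_i .
\]
At $\varepsilon=0$ this gives $s_i = \frac{1}{n-i+1}$. In particular $s_n=1$ and $\tau_n=0$, and $\tau_i>0$ for every $i<n$. All quantities in parts (2) and (3) of \Cref{lem:equiv} depend continuously on $\varepsilon$, because $F^{-1}$ is well behaved for a valid (continuous, bounded) $\D$.

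\textbf{Envy-freeness.} We check the strict inequalities of part (3), first at $\varepsilon=0$, where all $r_i$ are equal.
\begin{itemize}
\item For $i<j$ the condition reads $\E[X\mid X\ge F^{-1}(\tau_i)] > \E[X\mid X\le F^{-1}(\tau_i)]$. This is strict because $\D$ is continuous with nonzero variance and $\tau_i\in(0,1)$. (For $i=n$ there is no $j>i$.)
\item For $i>j$ with $i<n$ we need $\E[X\mid X\ge F^{-1}(\tau_i)]>\E[X]$. This is again strict because $\tau_i>0$.
\item For $i=n$ we have $\tau_n=0$, so at $\varepsilon=0$ the condition $r_n\E[X]>r_j\E[X]$ is only an equality. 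This is exactly why agent $n$ receives the extra $(n-1)\varepsilon$. For any $\varepsilon>0$ we get $r_n=1/n+(n-1)\varepsilon>1/n-\varepsilon=r_j$, which makes this case strict. The inequalities in the other cases remain strict for sufficiently small $\varepsilon$ by continuity.
\end{itemize}
Part (3) then gives envy-freeness with high probability.

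\textbf{Welfare.} Mirror the proof of \Cref{thm:pos-welf-2}. Write $r_i\E[X\mid X\ge F^{-1}(\tau_i)] = p_i\int_{\tau_i}^1F^{-1}(x)\,dx$, where $p_i = 1-\sum_{k<i}r_k$ is the probability that the QT-$\mathbf{s}$ scan reaches agent $i$. The numerator of $\alpha$ then becomes $\int_0^1 g(x)F^{-1}(x)\,dx$ with
\[
g(x)=\sum_i p_i\,1\{x\ge \tau_i\},
\]
a nondecreasing step function satisfying $\int_0^1 g=\sum_i r_i=1$. The denominator is $\int_0^1 nx^{n-1}F^{-1}(x)\,dx$. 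Since $F^{-1}$ is nondecreasing, it suffices to show
\[
G(t):=\int_t^1 g(x)\,dx\;\ge\;\tfrac12(1-t^n)\quad\text{for all } t\in[0,1],
\]
at $\varepsilon=0$ with a strict margin, and then absorb small $\varepsilon$ into the $o(1)$ term. (Alternatively, one can cite the known $1/2$ guarantee for this equal-probability stopping rule~\cite{arsenis2022individual}, via \Cref{thm:prophet-connection}.)

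\textbf{Main obstacle.} Proving $G(t)\ge\tfrac12(1-t^n)$ is the hardest step. I would split at the largest threshold $\tau_1=1-1/n$.
\begin{itemize}
\item For $t\ge\tau_1$, every indicator in $g$ is on, so $g\equiv\sum_ip_i=(n+1)/2$. Then $G(t)=\frac{n+1}{2}(1-t)\ge\frac n2(1-t)\ge\frac12(1-t^n)$, using $1-t^n\le n(1-t)$.
\item For $t<\tau_1$, use that $g$ is nondecreasing with total mass $1$, which gives $G(t)\ge 1-t$. Then interpolate using the explicit staircase $g(x)=\sum_{i\le k}p_i$ on $[\tau_{k+1},\tau_k)$.
\end{itemize}
I would first try an induction over these intervals. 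On each interval $G$ is linear and $\tfrac12(1-t^n)$ is concave, so it suffices to check the inequality at the breakpoints $t=\tau_k=1-\frac1{n-k+1}$. There it reduces to an explicit inequality between $\sum_{i\le k}p_i(\tau_i-\tau_k)$-type sums and $\tfrac12\bigl(1-(1-\tfrac1{n-k+1})^n\bigr)$, which should follow from $(1-1/m)^n\ge e^{-n/(m-1)}$-type estimates. If these breakpoint checks turn out to be loose or awkward, I would fall back on the prophet-inequality citation above.
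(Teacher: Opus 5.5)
Your route is the paper's. The envy-freeness part is the same three-case check of \Cref{lem:equiv}(3). For welfare, your $G(t)=\int_t^1 g(x)\,dx$ at $\varepsilon=0$ equals $\frac1n\sum_{i=1}^n\min\{1,i(1-t)\}$, which is the paper's integrand after substituting $t=F(x)$, and your case $t\ge\tau_1$ is the paper's Case 1.

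The gap is the remaining case $t<\tau_1$. You flag it as the main obstacle, and the method you propose for it does not work. On each interval $G$ is affine and $\tfrac12(1-t^n)$ is concave, so $G(t)-\tfrac12(1-t^n)$ is \emph{convex} there. A convex function can be nonnegative at both endpoints of an interval and negative in between: take $G$ to be the chord of the concave function. So checking the inequality at the breakpoints $t=\tau_k$ says nothing about the interiors, and those breakpoint inequalities were only conjectured anyway. Your auxiliary bound $G(t)\ge 1-t$ is too weak near $\tau_1$ once $n\ge 3$; for example, $n=3$, $t=2/3$ gives $1/3<19/54$. The citation fallback would require checking that the cited $1/2$ guarantee covers this exact threshold rule. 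The paper only says the rule appears in \cite{arsenis2022individual} under a different fairness notion.

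The missing step is one line. Since $g\ge 0$, $G$ is nonincreasing, so for $t\le\tau_1$ you get $G(t)\ge G(\tau_1)=\frac{n+1}{2}\cdot\frac1n=\frac{n+1}{2n}>\frac12\ge\frac12(1-t^n)$. This is the paper's Case 2, written there as $\frac1n\sum_i\min\{1,i(1-F(x))\}\ge\frac1n\sum_i\min\{1,i/n\}$ together with the trivial bound $1-F(x)^n\le 1$. It also gives the strict margin at $\varepsilon=0$ that your continuity step needs.

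A related point on phrasing: $\varepsilon$ is a fixed constant, so it is not ``absorbed into the $o(1)$'', which is in $m$. The argument is a strict margin at $\varepsilon=0$ plus continuity in $\varepsilon$, which is what you otherwise describe.
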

\begin{proof}[Proof of \Cref{thm: n agents dsic 2 approx}]
    To show the welfare approximation, we will apply part $(2)$ of \Cref{lem:equiv} to show that the welfare approximation is strictly greater than $1/2$ for $\varepsilon = 0$. Since the welfare approximation is continuous in $\varepsilon$ (as the thresholds in the expected values are continuous), it will remain greater than $1/2$ for sufficiently small $\varepsilon > 0$. 
    
    Using the definition in~\Cref{lem:equiv}, we have that $s_i = \frac{1/n}{1 - (i-1)/n} = \frac{1}{n - i + 1}$. Therefore, using part $(2)$ of \Cref{lem:equiv}, we have that the approximation ratio for $\varepsilon = 0$ is
    \[
    \frac{\sum_{i = 1}^n \frac{1}{n} \cdot \E[X \mid X \ge F^{-1}\left( 1 - \frac{1}{n + 1 - i} \right)]}{\E[\max_i X_i]} = \frac{\sum_{i = 1}^n \frac{1}{n} \cdot \E[X \mid X \ge F^{-1}\left(1 - \frac{1}{i} \right)]}{\E[\max_i X_i]}.
    \]

    Using the identity $\E[Y] = \int_0^\infty \Pr[Y > x]\, dx$ for nonnegative random variables, we can rewrite the expectation terms. For a fixed integration variable $x$, the conditional probability $\Pr[X > x \mid X \ge F^{-1}(1 - 1/i)]$ depends on whether $x$ is below or above the threshold $F^{-1}(1 - 1/i)$.
    \begin{itemize}
        \item If $x \le F^{-1}(1 - 1/i)$, then the condition $X > x$ is always met given the condition $X \ge F^{-1}(1 - 1/i)$, and thus, the probability is $1$.
        \item If $x > F^{-1}(1 - 1/i)$, then $\Pr[X > x \mid X \ge F^{-1}(1 - 1/i)] = \frac{\Pr[X > x]}{\Pr[X \ge F^{-1}(1 - 1/i)]} = \frac{1 - F(x)}{1/i} = i(1 - F(x))$.
    \end{itemize}
    
    Therefore, $\Pr[X > x \mid X \ge F^{-1}(1 - 1/i)] = \min\{ 1 , i(1 - F(x)) \}$. Summing over $i$ we have that $\sum_{i = 1}^n \frac{1}{n} \cdot \E[X \mid X \ge F^{-1}\left(1 - \frac{1}{i} \right)]$ is equal to
    \[
     \sum_{i=1}^n \frac{1}{n} \int_{x=0}^\infty \min\{ 1 , i(1 - F(x)) \} dx = \int_{x=0}^\infty \left( \frac{1}{n} \sum_{i=1}^n \min\{ 1 , i(1 - F(x)) \} \right) dx.
    \]
    
    On the other hand, $\E[\max_i X_i] = \int_0^\infty (1 - F(x)^n)\, dx$. We will prove that $\frac{1}{n} \sum_{i=1}^n \min\{ 1 , i(1 - F(x)) \} > \frac{1}{2} (1 - F(x)^n) $ for all $x \geq 0$.

    \paragraph{Case 1: $x > F^{-1}(1 - 1/n)$.} In this case, $1 - F(x) < 1/n$, and therefore $i(1 - F(x)) < 1$ for all $i \in [n]$. Therefore, 
    \[
    \frac{1}{n} \sum_{i=1}^n \min\{ 1 , i(1 - F(x)) \} = \frac{1 - F(x)}{n} \sum_{i=1}^n i = \frac{1-F(x)}{n} \frac{n(n+1)}{2} = \frac{n+1}{2} (1-F(x)).
    \]
    Using Bernoulli's inequality, $F(x)^n = (1 - (1 - F(x)))^n \ge 1 - n(1 - F(x))$. This implies that $1 - F(x)^n \le n(1 - F(x))$. So we overall have
    \[
    \frac{1}{n} \sum_{i=1}^n \min\{ 1 , i(1 - F(x)) \} = \frac{n+1}{2} (1-F(x)) > \frac{n}{2}(1 - F(x)),
    \]
    as desired.
    \paragraph{Case 2: $x \leq F^{-1}(1 - 1/n)$.}
    Here we use the trivial bound $1 - F(x)^n \le 1$, so it suffices to show that $\frac{1}{n} \sum_{i=1}^n \min\{ 1 , i(1 - F(x)) \} > 1/2$. Since $x \leq F^{-1}(1 - 1/n)$, $1 - F(x) \geq 1/n$. Therefore,
    \[
    \frac{1}{n} \sum_{i=1}^n \min\{ 1 , i(1 - F(x)) \} \geq \frac{1}{n} \sum_{i=1}^n \min\{ 1 , i/n \} = \frac{1}{n} \frac{n(n+1)}{2n} > \frac{1}{2},
    \]
    as desired.

    \paragraph{Envy-Freeness:}
We now verify envy-freeness for the vector $\mathbf{r} = (1/n - \varepsilon, \dots, 1/n - \varepsilon, 1/n + (n - 1)\varepsilon)$.
Let $\tau_i$ be the value threshold for agent $i$. For $i < n$, $\tau_i \in (0, 1)$ because $s_i < 1$. For agent $n$, $s_n=1$ (since they take everything remaining), so $\tau_n = F^{-1}(0) = 0$. There are three cases based on whether or not one of the agents is agent $n$.
\begin{enumerate}
    \item \emph{Agents $i, j < n$.} Here, $r_i = r_j$. Since $X$ is continuous and $\tau_i > 0$, $\E[X \mid X \ge F^{-1}(\tau_i)] > \E[X] \ge \E[X \mid X < F^{-1}(\tau_i)]$, so the inequality holds regardless of the order of $i$ and $j$.
    \item \emph{Agent $j = n$.} Here $r_i = 1/n - \varepsilon$ and $r_j = 1/n + (n-1)\varepsilon$. However, $\E[X \mid X \ge F^{-1}(\tau_i)] > \E[X \mid X < F^{-1}(\tau_i)]$. Thus, for sufficiently small $\varepsilon$, $r_i \E[X \mid X \ge F^{-1}(\tau_i)] > r_j \E[X \mid X < F^{-1}(\tau_i)]$.
    \item \emph{Agent $i = n$.} Here $r_i = 1/n + (n - 1)\varepsilon$. Furthermore, $\tau_i = 0$, so $\E[X \mid X \ge F^{-1}(\tau_i)] = \E[X]$. Therefore, for all $\varepsilon > 0$, $r_i \E[X \mid X \ge F^{-1}(\tau_i)] > r_j \E[X]$. 
\end{enumerate}
    Since all expected value conditions strictly hold, by \Cref{lem:equiv} part (3), the mechanism is envy-free with high probability.
\end{proof}

\section{BIC Mechanisms}

In this section, we study Bayesian Incentive Compatible mechanisms. Consider the \emph{Ranking} algorithm, defined as~\Cref{algo:ranking}. The algorithm, at a high level, works as follows. We interpret the agents' bids as ordinal, i.e., we convert the cardinal (reported) values to a ranking over the items. Then, we give each item to the agent who ranked it highest. We prove that this achieves strong theoretical guarantees.

Specifically, the algorithm will compute the ranking $\sigma_i$ based on the bids $b_i$ for each agent $i$, where item $j$ is ranked higher than item $k$ in $\sigma_i$ if and only if $b_{i,j} \geq b_{i,k}$. The algorithm will then give each item to the agent with the highest rank for that item (with consistent tie-breaking). Let $\rank(\sigma_i, j)$ be the rank of item $j$ in agent $i$'s list, with $1$ being the top ranked item. 

\begin{algorithm}[htb!]
\caption{Ranking Algorithm}
\label{algo:ranking}
\KwIn{Bids $\{b_{i,j}\}_{i \in \mathcal{N}, j \in \mathcal{M}}$ for each agent, item pair}
\BlankLine
$\sigma_i \gets \text{Ranking of bids $b_i$, breaking ties arbitrarily,} \quad  \forall i$\\
$A_i \gets \emptyset, \quad \forall i$

\For{$j \gets 1$ \KwTo $m$}{
    $r \gets \min_{i} \rank(\sigma_i, j)$ \\
    $i^* \gets \min_{i:\,\rank(\sigma_i, j)=r} i$ \\
    $A_{i^*} \gets A_{i^*} \cup \{j\}$ \\
}
\end{algorithm}

\begin{theorem}\label{thm: bic iid theorem}
    For all valid distributions $\mathcal{D}$, the ranking algorithm (\Cref{algo:ranking}) is BIC, envy-free with high probability, and achieves a $1 - o(1)$ approximation to welfare. 
\end{theorem}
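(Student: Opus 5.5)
The proof has three parts: BIC, welfare, and envy-freeness. The idea behind all three is that, under truthful reporting by the others, agent $i$'s ranking is a uniformly random permutation independent of everything else. This makes agent $i$'s probability of winning an item depend only on the rank $i$ assigns it.

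\textbf{BIC.} Fix agent $i$ and any report $b_i$. The mechanism depends on $b_i$ only through the induced ranking $\sigma_i$. The other agents report truthfully and their values are i.i.d.\ from a continuous $\D$, so each $\sigma_k$ for $k\neq i$ is a uniformly random permutation, independent across agents and of $v_i$. Therefore the probability $p_r$ that agent $i$ wins the item it ranks at position $r$ depends only on $r$ and not on which item that is. Agent $i$ wins it iff every other agent ranks it strictly lower, or ties at rank $r$ under the fixed index tie-breaking. This probability is non-increasing in $r$: for each other agent $k$, the rank of a given item in $\sigma_k$ is uniform on $[m]$, and the ranks for different $k$ are independent, so $p_r=\prod_{k\ne i}\Pr[\rank(\sigma_k,j)\succ r]$, which decreases in $r$. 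Agent $i$'s expected utility under ranking $\sigma$ is $\sum_r p_r\, v_{i,\sigma^{-1}(r)}$. By the rearrangement inequality this is maximized when items are sorted by true value, i.e.\ by truthful reporting. This holds for every $v_i$, so the mechanism is BIC.

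\textbf{Welfare.} I would compare with the welfare-maximizing allocation, which gives each item to $\arg\max_i v_{i,j}$. Under truthful reports, $\rank(\sigma_i,j)$ is essentially $m(1-F(v_{i,j}))$ up to fluctuations. Concretely, the empirical quantile $\hat q_{i,j}$ (the fraction of agent $i$'s items valued below $v_{i,j}$) satisfies $|\hat q_{i,j}-F(v_{i,j})|\le O(\sqrt{\log m/m})$ for all $i,j$ simultaneously with probability $1-O(1/m)$, by DKW/Hoeffding plus a union bound over $n$ agents. Ranking therefore gives item $j$ to an agent whose true quantile is within $\gamma=O(\sqrt{\log m/m})$ of the maximum quantile. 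Because the values are i.i.d., the agent with maximum quantile is exactly the agent with maximum value. The loss on item $j$ is nonzero only when the top two quantiles among its $n$ values are within $2\gamma$, which happens with probability $O(n^2\gamma)$. Each such item loses at most $1$, since values lie in $[0,1]$. The optimum is at least $m\E[\max_i X_i]=\Theta(m)$ because $\D$ has nonzero variance. So the expected loss is $O(n^2 m\gamma)+O(1)=o(m)$, giving a $1-o(1)$ approximation.

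\textbf{Envy-freeness w.h.p.} Again replace the mechanism by its quantile idealization: item $j$ goes to the agent with the highest $q_{i,j}=F(v_{i,j})$, i.e.\ the highest value. This is exactly the welfare-maximizing allocation. Dickerson et al.\ show it is EF w.h.p., with per-pair expected margin $\E[u_i(A_i)]-\E[u_i(A_k)]=m\cdot c(\D,n)$ for a constant $c>0$. The actual Ranking allocation differs from the idealized one only on items whose top two quantiles are within $2\gamma$. With high probability, a Chernoff bound puts the number of such items at $O(n^2\gamma m)=o(m)$. Each of these changes any agent's utility for any bundle by at most $1$. Combining this $o(m)$ perturbation with Hoeffding concentration of the idealized utilities, which deviate by $O(\sqrt{m\log m})$, and a union bound over the $n^2$ pairs, yields envy-freeness with probability $1-o(1)$.

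\textbf{Main obstacle.} The delicate step is the coupling between ranks and true quantiles. The uniform deviation bound on empirical quantiles has to be turned into a bound on the number of items where Ranking and argmax disagree. That in turn requires bounding the probability that the top two of $n$ uniform quantiles are within $2\gamma$; the density of their gap is bounded by $n(n-1)$, which gives the $O(n^2\gamma)$ rate. Everything else reduces to standard concentration.
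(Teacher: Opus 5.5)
Your proposal is correct and follows essentially the same route as the paper: non-increasing rank-position win probabilities $p_r$ plus rearrangement for BIC, a DKW coupling of ranks to true quantiles with a count of items whose top-two quantile gap is below $2\gamma$ for welfare, and the linear envy margin of the welfare-maximizing allocation, which survives $o(m)$ reassigned items, for envy-freeness. The differences are cosmetic. Your $O(n^2\gamma)$ small-gap bound is looser than the paper's $\mathrm{Beta}(1,n)$ bound of $2n\gamma$, and you obtain the linear margin from the expected gap plus Hoeffding rather than quoting Dickerson et al.'s high-probability $\Omega(m/n)$ gap directly.
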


To prove the BIC property, we show that for each position $j$, there is a fixed probability $p_j$ that an agent receives the item ranked in position $j$. Furthermore, these probabilities are monotonically decreasing in $j$. Thus, to maximize their expected value for the items they receive, an agent should report their true ranking, which is induced by truthfully reporting their values.
Regarding fairness and efficiency, the i.i.d.\@ assumption implies that~\Cref{algo:ranking} approximates \emph{welfare maximization}, i.e., the rule that gives each item to the agent with the highest value. Welfare maximization is known to have strong fairness and efficiency guarantees in this setting~\cite{dickerson2014computational}. Our proof shows that these guarantees extend to this approximate regime.

Consequently,~\Cref{algo:ranking} achieves nearly all desired properties. The primary trade-off is the $1 - o(1)$ approximation rather than exact optimality. However, it is straightforward to show that any mechanism achieving a perfect welfare guarantee cannot be incentive compatible: optimal welfare requires allocating every item to the agent with the highest value, creating an unavoidable incentive for agents to exaggerate their bids.

\begin{proof}[Proof of \Cref{thm: bic iid theorem}]
First, we prove the BIC property. Since item values are i.i.d.\ across agents, the rankings $\sigma_{-i}$ induced by other agents are, from $i$'s perspective, exchangeable across items: for any fixed item, the relative ranks assigned to it by the other agents have the same distribution as for any other item. Consequently, for each position $t \in [m]$ there exists a number $p_t \in [0,1]$ such that, regardless of which item agent $i$ places in position $t$, the probability that $i$ wins that item equals $p_t$. Moreover, these probabilities are monotone: $p_1\ge p_2\ge \cdots \ge p_m$, since moving an item higher in $\sigma_i$ can only increase $i$'s probability of being the highest-ranked agent for that item (tie-breaking is fixed and consistent). Therefore, given $\sigma_i$ and true valuation $v_i$, agent $i$'s expected utility can be written as
\[
\E[ u_i | \sigma_i, v_i ] = \sum_{t=1}^m p_t v_{i,\pi(t)},
\]
where $\pi(t)$ is the item agent $i$ placed in position $t$ (according to bids $b_i$). Since $(p_t)_{t=1}^m$ is a fixed, non-increasing sequence, the above expression is maximized by assigning larger values to smaller positions. Thus, agent $i$ maximizes expected utility by submitting bids whose rankings order the items by decreasing true value (i.e., the ranking induced by truthful bidding). This establishes BIC.

Next, we show that~\Cref{algo:ranking} matches welfare maximization on up to $O(n\sqrt{m \log m})$ items with probability $1 - O(n/m^2)$.

To this end, let $\hat{F}_i$ be $i$'s empirical distribution of values. Let $\mathcal{E}_i$ be the event that $\sup_x |\hat{F}_i(x) - F(x)| \le \sqrt{\log m / m}$. By the DKW inequality~\cite{massart1990tight}, 
\[
    \Pr[\mathcal{E}_i] \ge 1 - 2\exp(-2m(\sqrt{\log m / m})^2) = 1 - 2\exp(-2\log m) = 1 - 2/m^2.
\] Conditioned on $\mathcal{E}_i$ for all $i \in \mathcal{N}$, all empirical quantiles are within $\sqrt{\log m / m}$ of their true quantiles. Thus, as long as there is a quantile difference between agents of at least $2\sqrt{\log m / m}$, then the higher-value agent will rank it higher. The difference between the highest quantile and second highest quantile of $n$ i.i.d.\@ draws follows a $\text{Beta}(1, n)$ distribution. As the PDF of this distribution is at most $n$, the probability that it takes a value smaller than $2\sqrt{\log m / m}$ is at most $2n\sqrt{\log m / m}$ for each item, and this is independent across items. The expected number of items that have a small highest-to-second-highest-quantile difference (recall that only those items can be allocated to the ``wrong'' agent) is therefore $2n\sqrt{m \log m}$. By Hoeffding's inequality, the number of items that have a small highest-to-second-highest-quantile difference is at most $(2n + 1)\sqrt{m \log m}$ with probability at least $1 - \exp(-2(\sqrt{m \log m})^2/m) = 1- 1/m^2$. Conditioned on this event along with all of $\mathcal{E}_i$ (which by a union bound occurs with probability $1 - O(n/m^2)$), \Cref{algo:ranking} and welfare maximization differ on at most $O(n \sqrt{m \log m})$ items. This implies a $1 - o(1)$ welfare approximation. Furthermore, by~\citet{dickerson2014computational}, for the allocation $A^*$ chosen by the welfare maximization algorithm, $v_i(A^*_i) - v_i(A^*_j) \in \Omega(m/n)$ with high probability (recalling that values are in $[0,1]$ without loss of generality). Thus, conditioned on this event as well, even moving $O(n\sqrt{m \log m})$ items adversarially, the allocation remains envy-free with high probability.
\end{proof}

Finally, we show that~\Cref{algo:ranking} can be extended to a large class of independent but \emph{non}-identical distributions (e.g., different agents have different distributions) using similar tools to~\cite{baienvy2022} (who study the existence of envy-free and efficient allocations in the non-i.i.d. stochastic setting). We expand on this in \Cref{app:non-identical-independent}.

\section{Discussion}
In this work, we revisited the problem of truthful fair division through the lens of stochastic valuations. We showed that moving beyond worst-case analysis allows us to bypass various impossibility results. 
Nonetheless, there are a number of avenues for future work. 

\paragraph{Tightness for $n > 2$ Agents.}
While we provide a complete picture for the DSIC two-agent case, the case for $n > 2$ has some gaps. Our prophet inequality-based mechanisms achieve a $\approx 0.745$-approximation, but we do not know if this is tight. Proving impossibility results for $n > 2$ is significantly more challenging than for the two-agent case. For $n=2$, we leveraged the structural characterization of \citet{amanatidis2017truthful} which restricts all DSIC mechanisms to "picking-exchange" rules. No such characterization exists for $n > 2$. In the general case, DSIC mechanisms can be much more complex; for instance, the mechanism used to divide items between agents 1 and 2 could depend on the report of agent 3 in intricate ways. Without a structural foothold, ruling out the existence of high-welfare DSIC mechanisms for more than two agents remains a difficult theoretical hurdle.

\paragraph{Correlated Valuations.} Our analysis relied heavily on the independence of item values. A natural and important direction for future work is to extend these results to distributions where values are correlated across agents (although still independent across items), for example, from \citet{benade2024fair}. One promising angle is to view the distribution of item values as a cake in the classic cake-cutting problem. For two agents, this analogy can be used to show that a Pick-$r$ style mechanism for some ordering of the agents achieves envy-freeness with high probability. We expand on this in \Cref{app:cake-cutting}. However, considering Pareto optimality (a more natural notion of efficiency when agents' value magnitudes are different) or extending this to more agents seems to require new techniques.

\paragraph{Beyond additive valuations.} Finally, it would be very compelling to extend these results to settings beyond additive valuations. In many combinatorial allocation problems, agents exhibit substitutes or complements (e.g., submodular or supermodular valuations). A major challenge here is modeling: defining natural, analytically tractable distributions over complex valuation functions is non-trivial. One recent example in this direction is the work of~\citet{benade2024existence}, which studies the existence of envy-free allocations under a beyond-additive stochastic model. However, incorporating incentives in such non-additive stochastic settings remains an open problem.

\bibliographystyle{ACM-Reference-Format}
\bibliography{references}

@article{hill1982comparisons,
  title={Comparisons of stop rule and supremum expectations of iid random variables},
  author={Hill, Theodore P and Kertz, Robert P},
  journal={The Annals of Probability},
  pages={336--345},
  year={1982},
  publisher={JSTOR}
}

@inproceedings{correa2017posted,
  title={Posted price mechanisms for a random stream of customers},
  author={Correa, Jos{\'e} and Foncea, Patricio and Hoeksma, Ruben and Oosterwijk, Tim and Vredeveld, Tjark},
  booktitle={Proceedings of the 2017 ACM Conference on Economics and Computation},
  pages={169--186},
  year={2017}
}

@article{arsenis2022individual,
  title={Individual fairness in prophet inequalities},
  author={Arsenis, Makis and Kleinberg, Robert},
  journal={arXiv preprint arXiv:2205.10302},
  year={2022}
}

@inproceedings{halpern2020fair,
  title={Fair division with binary valuations: One rule to rule them all},
  author={Halpern, Daniel and Procaccia, Ariel D and Psomas, Alexandros and Shah, Nisarg},
  booktitle={International Conference on Web and Internet Economics},
  pages={370--383},
  year={2020},
  organization={Springer}
}

@article{kertz1986stop,
  title={Stop rule and supremum expectations of iid random variables: a complete comparison by conjugate duality},
  author={Kertz, Robert P},
  journal={Journal of multivariate analysis},
  volume={19},
  number={1},
  pages={88--112},
  year={1986},
  publisher={Elsevier}
}

@inproceedings{halpern2025online,
  title={Online Envy Minimization and Multicolor Discrepancy: Equivalences and Separations},
  author={Halpern, Daniel and Psomas, Alexandros and Verma, Paritosh and Xie, Daniel},
  booktitle={Proceedings of the 26th ACM Conference on Economics and Computation},
  pages={188--188},
  year={2025}
}

@article{benade2024fair,
  title={Fair and efficient online allocations},
  author={Benad{\`e}, Gerdus and Kazachkov, Aleksandr M and Procaccia, Ariel D and Psomas, Alexandros and Zeng, David},
  journal={Operations Research},
  volume={72},
  number={4},
  pages={1438--1452},
  year={2024},
  publisher={INFORMS}
}

@article{krengel1978semiamarts,
  title={On semiamarts, amarts, and processes with finite value},
  author={Krengel, Ulrich and Sucheston, Louis},
  journal={Probability on Banach spaces},
  volume={4},
  number={197-266},
  pages={1--2},
  year={1978},
  publisher={Dekker New York}
}

@inproceedings{benade2024existence,
  title={On the Existence of Envy-Free Allocations Beyond Additive Valuations},
  author={Benad{\`e}, Gerdus and Halpern, Daniel and Psomas, Alexandros and Verma, Paritosh},
  booktitle={Proceedings of the 25th ACM Conference on Economics and Computation},
  pages={1287--1287},
  year={2024}
}

@article{manurangsi2021closing,
  title={Closing gaps in asymptotic fair division},
  author={Manurangsi, Pasin and Suksompong, Warut},
  journal={SIAM Journal on Discrete Mathematics},
  volume={35},
  number={2},
  pages={668--706},
  year={2021},
  publisher={SIAM}
}

@inproceedings{mossel2010truthful,
  title={Truthful fair division},
  author={Mossel, Elchanan and Tamuz, Omer},
  booktitle={International Symposium on Algorithmic Game Theory},
  pages={288--299},
  year={2010},
  organization={Springer}
}

@article{papai2001strategyproof,
  title={Strategyproof and nonbossy multiple assignments},
  author={P{\'a}pai, Szilvia},
  journal={Journal of Public Economic Theory},
  volume={3},
  number={3},
  pages={257--271},
  year={2001},
  publisher={Wiley Online Library}
}

@article{massart1990tight,
  title={The tight constant in the Dvoretzky-Kiefer-Wolfowitz inequality},
  author={Massart, Pascal},
  journal={The annals of Probability},
  pages={1269--1283},
  year={1990},
  publisher={JSTOR}
}

@inproceedings{babaioff2025truthful,
  title={On Truthful Mechanisms without Pareto-efficiency: Characterizations and Fairness},
  author={Babaioff, Moshe and Manaker Morag, Noam},
  booktitle={Proceedings of the 26th ACM Conference on Economics and Computation},
  pages={447--447},
  year={2025}
}

@article{ortega2022obvious,
  title={Obvious manipulations in cake-cutting},
  author={Ortega, Josu{\'e} and Segal-Halevi, Erel},
  journal={Social Choice and Welfare},
  volume={59},
  number={4},
  pages={969--988},
  year={2022},
  publisher={Springer}
}

@article{mennle2021partial,
  title={Partial strategyproofness: Relaxing strategyproofness for the random assignment problem},
  author={Mennle, Timo and Seuken, Sven},
  journal={Journal of Economic Theory},
  volume={191},
  pages={105144},
  year={2021},
  publisher={Elsevier}
}

@inproceedings{amanatidis2023round,
  title={Round-robin beyond additive agents: Existence and fairness of approximate equilibria},
  author={Amanatidis, Georgios and Birmpas, Georgios and Lazos, Philip and Leonardi, Stefano and Reiffenh{\"a}user, Rebecca},
  booktitle={Proceedings of the 24th ACM Conference on Economics and Computation},
  pages={67--87},
  year={2023}
}

@inproceedings{abebe2020truthful,
  title={A truthful cardinal mechanism for one-sided matching},
  author={Abebe, Rediet and Cole, Richard and Gkatzelis, Vasilis and Hartline, Jason D},
  booktitle={Proceedings of the fourteenth annual ACM-SIAM symposium on discrete algorithms},
  pages={2096--2113},
  year={2020},
  organization={SIAM}
}

@article{amanatidis2024allocating,
  title={Allocating indivisible goods to strategic agents: Pure nash equilibria and fairness},
  author={Amanatidis, Georgios and Birmpas, Georgios and Fusco, Federico and Lazos, Philip and Leonardi, Stefano and Reiffenh{\"a}user, Rebecca},
  journal={Mathematics of operations research},
  volume={49},
  number={4},
  pages={2425--2445},
  year={2024},
  publisher={INFORMS}
}

@inproceedings{babaioff2021fair,
  title={Fair and truthful mechanisms for dichotomous valuations},
  author={Babaioff, Moshe and Ezra, Tomer and Feige, Uriel},
  booktitle={Proceedings of the AAAI Conference on Artificial Intelligence},
  volume={35},
  number={6},
  pages={5119--5126},
  year={2021}
}

@article{amanatidis2021maximum,
  title={Maximum Nash welfare and other stories about EFX},
  author={Amanatidis, Georgios and Birmpas, Georgios and Filos-Ratsikas, Aris and Hollender, Alexandros and Voudouris, Alexandros A},
  journal={Theoretical Computer Science},
  volume={863},
  pages={69--85},
  year={2021},
  publisher={Elsevier}
}

@inproceedings{friedman2019fair,
  title={Fair and efficient memory sharing: Confronting free riders},
  author={Friedman, Eric J and Gkatzelis, Vasilis and Psomas, Christos-Alexandros and Shenker, Scott},
  booktitle={Proceedings of the AAAI Conference on Artificial Intelligence},
  volume={33},
  number={01},
  pages={1965--1972},
  year={2019}
}

@inproceedings{markakis2011worst,
  title={On worst-case allocations in the presence of indivisible goods},
  author={Markakis, Evangelos and Psomas, Christos-Alexandros},
  booktitle={International Workshop on Internet and Network Economics},
  pages={278--289},
  year={2011},
  organization={Springer}
}

@inproceedings{amanatidis2016truthful,
  title={On truthful mechanisms for maximin share allocations},
  author={Amanatidis, Georgios and Birmpas, Georgios and Markakis, Evangelos},
  booktitle={Proceedings of the Twenty-Fifth International Joint Conference on Artificial Intelligence},
  pages={31--37},
  year={2016}
}

@inproceedings{cole2013mechanism,
  title={Mechanism design for fair division: allocating divisible items without payments},
  author={Cole, Richard and Gkatzelis, Vasilis and Goel, Gagan},
  booktitle={Proceedings of the fourteenth ACM conference on Electronic commerce},
  pages={251--268},
  year={2013}
}

@inproceedings{barman2022truthful,
  title={Truthful and fair mechanisms for matroid-rank valuations},
  author={Barman, Siddharth and Verma, Paritosh},
  booktitle={Proceedings of the AAAI Conference on Artificial Intelligence},
  volume={36},
  number={5},
  pages={4801--4808},
  year={2022}
}

@inproceedings{ghodsi2011dominant,
  title={Dominant resource fairness: Fair allocation of multiple resource types},
  author={Ghodsi, Ali and Zaharia, Matei and Hindman, Benjamin and Konwinski, Andy and Shenker, Scott and Stoica, Ion},
  booktitle={8th USENIX symposium on networked systems design and implementation (NSDI 11)},
  year={2011}
}

@inproceedings{baienvy2022,
  title={Envy-Free and Pareto-Optimal Allocations for Agents with Asymmetric Random Valuations},
  author={Bai, Yushi and G{\"o}lz, Paul},
  booktitle={Proceedings of the Thirty-First International Joint Conference on Artificial Intelligence, IJCAI},
  year={2022}
}

@article{parkes2015beyond,
  title={Beyond dominant resource fairness: Extensions, limitations, and indivisibilities},
  author={Parkes, David C and Procaccia, Ariel D and Shah, Nisarg},
  journal={ACM Transactions on Economics and Computation (TEAC)},
  volume={3},
  number={1},
  pages={1--22},
  year={2015},
  publisher={ACM New York, NY, USA}
}

@inproceedings{friedman2014strategyproof,
  title={Strategyproof allocation of discrete jobs on multiple machines},
  author={Friedman, Eric and Ghodsi, Ali and Psomas, Christos-Alexandros},
  booktitle={Proceedings of the fifteenth ACM conference on Economics and computation},
  pages={529--546},
  year={2014}
}

@inproceedings{dickerson2014computational,
  title={The computational rise and fall of fairness},
  author={Dickerson, John and Goldman, Jonathan and Karp, Jeremy and Procaccia, Ariel and Sandholm, Tuomas},
  booktitle={Proceedings of the AAAI conference on artificial intelligence},
  volume={28},
  number={1},
  year={2014}
}

@inproceedings{hartman2025s,
  title={It’s Not All Black and White: Degree of Truthfulness for Risk-Avoiding Agents},
  author={Hartman, Eden and Segal-Halevi, Erel and Tao, Biaoshuai},
  booktitle={Proceedings of the 26th ACM Conference on Economics and Computation},
  pages={996--1016},
  year={2025}
}

@inproceedings{bu2024truthful,
  title={Truthful and Almost Envy-Free Mechanism of Allocating Indivisible Goods: the Power of Randomness},
  author={Bu, Xiaolin and Tao, Biaoshuai},
    booktitle={Proceedings of the 66th Symposium on Foundations of Computer Science},
  year={2024}
}

@article{psomas2022fair,
  title={Fair and efficient allocations without obvious manipulations},
  author={Psomas, Alexandros and Verma, Paritosh},
  journal={Advances in Neural Information Processing Systems},
  volume={35},
  pages={13342--13354},
  year={2022}
}

@inproceedings{gkatzelis2024getting,
  title={Getting more by knowing less: Bayesian incentive compatible mechanisms for fair division},
  author={Gkatzelis, Vasilis and Psomas, Alexandros and Tan, Xizhi and Verma, Paritosh},
  booktitle={Proceedings of the Thirty-Third International Joint Conference on Artificial Intelligence},
  pages={2807--2815},
  year={2024}
}

@inproceedings{lipton2004approximately,
  title={On approximately fair allocations of indivisible goods},
  author={Lipton, Richard J and Markakis, Evangelos and Mossel, Elchanan and Saberi, Amin},
  booktitle={Proceedings of the 5th ACM Conference on Electronic Commerce},
  pages={125--131},
  year={2004}
}

@article{schummer1996strategy,
  title={Strategy-proofness versus efficiency on restricted domains of exchange economies},
  author={Schummer, James},
  journal={Social Choice and Welfare},
  volume={14},
  number={1},
  pages={47--56},
  year={1996},
  publisher={Springer}
}

@article{garg2025designing,
  title={Designing Truthful Mechanisms for Asymptotic Fair Division},
  author={Garg, Jugal and Narayan, Vishnu V and Shen, Yuang Eric},
  journal={arXiv preprint arXiv:2512.10892},
  year={2025}
}

@inproceedings{amanatidis2017truthful,
  title={Truthful allocation mechanisms without payments: Characterization and implications on fairness},
  author={Amanatidis, Georgios and Birmpas, Georgios and Christodoulou, George and Markakis, Evangelos},
  booktitle={Proceedings of the 2017 ACM Conference on Economics and Computation},
  pages={545--562},
  year={2017}
}

@article{papai2000strategyproof,
  title={Strategyproof multiple assignment using quotas},
  author={P{\'a}pai, Szilvia},
  journal={Review of Economic Design},
  volume={5},
  number={1},
  pages={91--105},
  year={2000},
  publisher={Springer}
}

@book{shaked2007stochastic,
  title={Stochastic orders},
  author={Shaked, Moshe and Shanthikumar, J George},
  year={2007},
  publisher={Springer}
}

@article{dubins1961cut,
  title={How to cut a cake fairly},
  author={Dubins, Lester E and Spanier, Edwin H},
  journal={The American Mathematical Monthly},
  volume={68},
  number={1P1},
  pages={1--17},
  year={1961},
  publisher={Taylor \& Francis}
}

@article{stromquist1980cut,
  title={How to cut a cake fairly},
  author={Stromquist, Walter},
  journal={The American Mathematical Monthly},
  volume={87},
  number={8},
  pages={640--644},
  year={1980},
  publisher={Taylor \& Francis}
}

\appendix

\newpage

\section{Non-Identical Distributions}\label{app:non-identical-independent}

In this section, we extend our results to the setting where agents' valuations are independent but not necessarily identically distributed. We adopt the model of \citet{baienvy2022}. Instead of all agents drawing values from the same distribution $\mathcal{D}$, there are $n$ distributions $\mathcal{D}_1, \ldots, \mathcal{D}_n$, and each agent $i$'s value for item $j$ is drawn independently as  $v_{i,j} \sim \mathcal{D}_i$.

We assume (as \citeauthor{baienvy2022} did) that the distributions satisfy regularity conditions: The distributions admit probability density functions (PDFs) $f_1, \dots, f_n$ that satisfy
\begin{enumerate}
    \item Interval Support: For each $i$, there exists an interval $[a_i, b_i] \subseteq [0, 1]$ such that $f_i(x) > 0$ if and only if $x \in [a_i, b_i]$.
    \item Bounded PDFs: There exist constants $p, q > 0$ such that for all $i$ and all $x$ where $f_i(x) > 0$, we have $p \le f_i(x) \le q$.
\end{enumerate}

In this setting, maximizing social welfare is no longer an appropriate notion of efficiency, as agents' values maybe an arbitrarily different scales. Instead, we focus on \emph{Pareto optimality (PO)}. An allocation $(A_1, \ldots, A_n)$ is PO if the values $(u_1(A_1), \ldots, u_n(A_n))$ are not \emph{Pareto dominated}, i.e., there is no allocation $(A'_1, \ldots, A'_n)$ such that $u_i(A'_i) \ge u_i(A_i)$ for all $i$, with strict inequality for at least one agent. We say an allocation is $\alpha$-Pareto-Optimal ($\alpha$-PO) if $(u_1(A_1)/\alpha, \ldots, u_n(A_n)/\alpha)$ is not Pareto dominated.

\citeauthor{baienvy2022} show that under these regularity conditions, there exist weights $w_1, \ldots, w_n$ such that maximizing \emph{weighted} welfare, assigning item $j$ to an agent maximizing $w_i \cdot v_{i,j}$, yields an allocation that is PO and EF with high probability as $m$ grows large. These weights are chosen such that the probability that any specific agent $i$ has the highest weighted value for an item is exactly $1/n$. As in \citet{dickerson2014computational}, they in fact show that not only will the allocation be EF, but $u_i(A_i) - u_i(A_j) \in \Omega(m/n)$ with high probability.

We show that under the same conditions as \citet{baienvy2022}, a weighted version of the ranking algorithm is still BIC with high probability. Note that the algorithm will now depend on the distributions $\mathcal{D}$. 

\begin{algorithm}[htb!]
\caption{Weighted Ranking Algorithm for CDFs $F_1, \ldots, F_n$.}
\label{algo:ranking2}
\KwIn{$\{b_i\}$}
\BlankLine

$w_1, \ldots, w_n \gets$ Weights for $F_1, \ldots, F_n$ from \citet{baienvy2022}.

$\sigma_i \gets \text{Ranking of bids $b_i$, breaking ties arbitrarily} \quad  \forall i$\\
$A_i \gets \emptyset \quad \forall i$

\For{$j \gets 1$ \KwTo $m$}{
    $\hat{v}_{ij} \gets F^{-1}(1 - \rank(\sigma_i, j)/(m + 1))$\\
    $v \gets \max_{i} w_i \cdot \hat{v}_{ij}$ \\
    $i^* \gets \min_{i:\, =r} i$ \\
    $A_{i^*} \gets A_{i^*} \cup \{j\}$ \\
}
\end{algorithm}

\begin{theorem}\label{thm: bic non-iid theorem}
    For all interval-support and PDF-bounded distributions $\D_1, \ldots, \D_n$ with CDFs $F_1, \ldots, F_n$, \Cref{algo:ranking2} is BIC, EF, and $(1 - \varepsilon)$-PO with high probability. 
\end{theorem}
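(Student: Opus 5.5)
The proof follows the template of \Cref{thm: bic iid theorem}, in three parts.

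\textbf{BIC.} Algorithm~\ref{algo:ranking2} uses only each agent's ranking $\sigma_i$. From agent $i$'s side, each other agent $k$'s ranking $\sigma_k$ is a uniformly random permutation, independent across agents, because $k$'s values are i.i.d.\ across items. Agent $i$ wins item $j$ exactly when
\[
w_i F_i^{-1}\bigl(1-\rank(\sigma_i,j)/(m+1)\bigr) > \max_{k\ne i} w_k F_k^{-1}\bigl(1-\rank(\sigma_k,j)/(m+1)\bigr),
\]
under the fixed tie-breaking rule. The left side depends only on the position $t=\rank(\sigma_i,j)$. The right side, for a fixed item, has a distribution that does not depend on which item sits in position $t$, since the $\sigma_k$ are uniform permutations independent of $\sigma_i$.

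\begin{itemize}
\item So there is a winning probability $p_t$ depending only on the position $t$.
\item $p_t$ is non-increasing in $t$, because $F_i^{-1}$ is non-decreasing.
\item Expected utility is therefore $\sum_t p_t v_{i,\pi(t)}$, which the rearrangement inequality maximizes by ranking items truthfully.
\end{itemize}

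This gives BIC exactly, not only with high probability. Note that the algorithm should be read with $F_i^{-1}$ in place of $F^{-1}$.

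\textbf{Coupling with weighted welfare maximization.} Next I would show that, with high probability, the allocation agrees with the \citet{baienvy2022} weighted-welfare allocation $A^*$ on all but $O(n\sqrt{m\log m})$ items.

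\begin{itemize}
\item By DKW for each agent and a union bound, with probability $1-O(n/m^2)$ every empirical rank quantile $1-\rank(\sigma_i,j)/(m+1)$ is within $O(\sqrt{\log m/m})$ of the true quantile $F_i(v_{i,j})$.
\item Because the PDFs are bounded below by $p$ on an interval support, $F_i^{-1}$ is $(1/p)$-Lipschitz. Hence $|\hat v_{ij}-v_{i,j}|=O(\sqrt{\log m/m})$ uniformly over all $i,j$.
\item Misallocation relative to $A^*$ can therefore happen only when the top two weighted values $w_iv_{i,j}$ are within $\eta=O(\sqrt{\log m/m})$ of each other. The weights are constants.
\item Since the densities are bounded above by $q$, for each item this gap event has probability $O(n^2 q\,\eta)$, by a union bound over pairs and bounded density of the difference.
\item Gap events are independent across items, so Hoeffding bounds the number of bad items by $O(n^2\sqrt{m\log m})$ with probability $1-1/m^2$.
\end{itemize}

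\textbf{EF and approximate PO.} For EF, I would use the fact that \citet{baienvy2022} give $u_i(A^*_i)-u_i(A^*_k)\in\Omega(m/n)$ with high probability. Moving $o(m)$ items, each of value at most $1$, cannot destroy this margin.

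For $(1-\varepsilon)$-PO, $A^*$ maximizes the weighted welfare $\sum_i w_i u_i$, so any allocation $A'$ satisfies $\sum_i w_iu_i(A')\le \sum_i w_iu_i(A^*)$.

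\begin{itemize}
\item Each $u_i(A^*)=\Theta(m)$ with high probability: agent $i$ wins about $m/n$ items, each of value at least $a_i$, with values bounded away from zero as in Bai--Gölz. Our allocation $A$ then has $u_i(A)\ge u_i(A^*)-o(m)\ge(1-\varepsilon)u_i(A^*)$ for every $i$ once $m$ is large.
\item Suppose some $A'$ Pareto dominates $(u_i(A)/(1-\varepsilon))_i$. Then $u_i(A')\ge u_i(A^*)$ for all $i$, with strict inequality somewhere.
\item Such an $A'$ would weakly dominate $A^*$ with a strict gain for at least one agent. This contradicts the Pareto optimality of $A^*$ shown by \citet{baienvy2022}.
\end{itemize}

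\textbf{Main obstacle.} I expect the main obstacle to be the coupling step. It requires transferring rank information into value estimates uniformly, which needs the lower density bound through Lipschitzness of $F_i^{-1}$. It also requires controlling near-ties of weighted values across heterogeneous distributions, which needs the upper density bound. Finally, the weights from \citet{baienvy2022} must be constants bounded away from $0$, so that scaling by $w_i$ preserves both bounds.
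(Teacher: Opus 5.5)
Your proposal is correct and follows essentially the same route as the paper. Position-dependent, non-increasing winning probabilities $p_t$ give BIC, and a DKW-plus-near-tie coupling shows the allocation differs from the weighted-welfare allocation of \citet{baienvy2022} on only $O(\sqrt{m\log m})$ items (suppressing $n$), so EF (via the linear margin) and $(1-\varepsilon)$-PO transfer.

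Your version is in fact slightly more careful than the paper's in two places:
\begin{itemize}
\item The paper attributes the value-space error to the upper density bound, whereas the correct source is the lower bound $p$, which is the Lipschitz constant of $F_i^{-1}$.
\item The paper simply cites robustness for $(1-\varepsilon)$-PO, whereas you derive it directly from the weighted-welfare optimality of $A^*$.
\end{itemize}
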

\begin{proof}   
    We first establish the BIC property. The logic follows the same as \Cref{thm: bic iid theorem}. From the perspective of agent $i$, the values (and thus the rankings) of all other agents are drawn independently across items. Consequently, the distribution of the "competing" weighted virtual values—$\max_{k \neq i} \{ w_k F_k^{-1}(q_{k,j}) \}$—is identical for every item $j$. This implies that for any rank position $r \in [m]$, there exists a fixed probability $p_r$ that agent $i$ wins the item they place at rank $r$. Crucially, because the CDF $F_i$ is monotone non-decreasing, placing an item at a higher rank (smaller $r$) results in a higher implied quantile $q_{i,j}$, a higher virtual value $\tilde{v}_{i,j}$, and strictly higher winning probability $p_r$. Thus, $p_1 \ge p_2 \ge \dots \ge p_m$. To maximize expected utility $\sum_{j} v_{i,j} \cdot \mathbb{P}(\text{win } j)$, agent $i$ must assign higher probabilities to higher-valued items. Truthful reporting (ranking by value) is therefore a dominant strategy in the induced game, establishing BIC.

    Next, we address fairness and efficiency. We show that the allocation produced by \Cref{algo:ranking2} differs from the ideal weighted-welfare maximization on only a vanishing fraction of items. The proof is again similar to \Cref{thm: bic iid theorem}, except we need to handle the different values with slightly more care.

    First, we claim it is without loss of generality to assume all weights $w_i = 1$. Indeed,
    instead of working directly with these distributions, we will consider the \emph{weighted} distributions $w_1 \mathcal{D}_1, \ldots, w_n \mathcal{D}_n$. Note that these will all be  $p'$-$q'$-PDF bounded for $p' = p/\max_i w_i$ and $q' q/\min_i w_i$. Our proof will hold for these weighted distributions, and both of these lead to indistinguishable algorithms. 

    Let $\hat{F}_i$ denote the empirical CDF of agent $i$'s values. By the DKW inequality, with probability $1 - O(1/m^2)$, we have $\sup_x | \hat{F}_i(x) - F_i(x) | \le \sqrt{(\ln m)/m}$. Because the PDFs are bounded above by $q'$, the error in the value space is also bounded. Specifically, for any quantile $x$, the difference between the true value and the value implied by the empirical rank is bounded by $q' \cdot O(\sqrt{(\ln m)/m})$.

    Consider an item $j$. \Cref{algo:ranking2} assigns $j$ to the agent with the highest virtual value. An error occurs only if the agent with the highest true value, say $i^*$, loses to some agent $k$ because of the ranking error. This requires the true values $v_{i^*,j}$ and $v_{k,j}$ to be within $O(\sqrt{(\ln m)/m})$ of each other. Since the difference of two independent random variables with bounded PDFs also has a bounded PDF, the probability that $|v_{i^*,j} - v_{k,j}| \le \epsilon$ is $O(\epsilon)$. Substituting our error bound, the probability that a specific item is allocated ``incorrectly" (compared to exact weighted-welfare maximization) is $O(\sqrt{(\ln m)/m})$. Summing over all items and applying Hoeffding’s inequality, the total number of mismatched items is $O(\sqrt{m \ln m})$ with high probability. As established by \citet{dickerson2014computational} and \citet{baienvy2022}, the ideal weighted allocation is robust to adversarial changes of $o(m)$ items. Thus, \Cref{algo:ranking2} retains the properties of Envy-Freeness and $(1-\epsilon)$-Pareto Optimality with high probability.
\end{proof}

\section{A cake-cutting approach to correlated values}\label{app:cake-cutting}

In this section, we consider the setting where item values are drawn from a correlated distribution $\mathcal{D}$. That is, the valuation vectors $(v_{i,1}, \dots, v_{i,n}) \sim \mathcal{D}$ are correlated across agents, though we maintain the assumption that values are independent and identically distributed across items. This matches the model of, e.g., \citet{benade2024fair}.

We focus on the case of $n=2$ agents. We show that under a mild assumption on the correlation structure of $\mathcal{D}$—specifically, that the agents' quantile rankings are not perfectly dependent—there exists a Pick-$r$ mechanism that yields an allocation that is Envy-Free with high probability.

\textbf{Correlation Assumption.} We assume that for any agent $i$ and any quantile threshold $q \in (0,1)$, the set of items agent $i$ ranks above the $q$-th quantile is not identical to the set of items agent $j$ ranks above the $q$-th quantile with probability 1. Formally, if $F_i$ is the marginal CDF for agent $i$, we assume that for any $q$ where the upper tail has non-zero probability, $\mathbb{P}(F_j(v_j) \ge q \mid F_i(v_i) \ge q) < 1$.

\paragraph{Two agents.} Without loss of generality, normalize the distribution $\mathcal{D}$ such that $\mathbb{E}[v_{i,j}] = 1$ for all agents $i$. We define a critical quantile threshold $\tau_i \in [0,1]$ for each agent $i$. This threshold represents the quantile cutoff required for the agent to achieve exactly half of their total expected value. Formally, $\tau_i$ satisfies:$$\mathbb{E}_{v \sim F_i}[v \cdot \mathbb{I}(F_i(v) \ge \tau_i)] = \frac{1}{2}\mathbb{E}[v] = \frac{1}{2}.$$Let $i$ be an agent such that $\tau_i \ge \tau_j$ for $j \ne i$. We run the Pick-$r$ mechanism with agent $i$ picking first, setting $r$ to be the fraction of items with mass $(1-\tau_i) + \varepsilon$ for sufficiently small $\varepsilon$. The key observation is that agent $i$ will value the items they receive at strictly more than half of their total value (and will therefore not envy $j$). Furthermore, because $\tau_i \ge \tau_j$ and valuations are not perfectly correlated, agent $j$'s expected value for agent $i$'s bundle is strictly less than $1/2$. Hence, for sufficiently large $m$, concentration inequalities ensure that these expected preferences translate to an EF allocation with high probability. Furthermore, for two agents, envy-freeness implies a $1/2$-approximation to maximum welfare (as all agents are receiving at least $1/2$ of their maximum possible utility).\paragraph{Extensions to $n > 2$ agents and cake cutting.}For $n > 2$ agents, the problem becomes significantly more challenging. Our procedure for $n=2$ can be viewed as a discrete analogue of the Dubins-Spanier Moving-Knife procedure~\cite{dubins1961cut}, where a knife moves across ``quantile space'' rather than across the physical cake. Under mild correlation assumptions, this approach can be generalized to guarantee a proportional allocation using a DSIC mechanism for any number of agents. In standard cake-cutting, \emph{contiguous} envy-free allocations always exist for any $n$~\cite{stromquist1980cut}. It remains an open question whether this result extends to \emph{quantile-contiguous} allocations, where there is an ordering of the agents $\sigma(1), \ldots, \sigma(n)$, where each agent $\sigma(i)$ receives the portion of the cake with highest density among the remaining pieces of $\sigma(i), \ldots, \sigma(n)$. The existence of such an allocation would ensure that there is a DSIC algorithm for $n$ agents that is EF with high probability, even with correlated valuations. Furthermore, it is unclear how to combine these properties with efficiency.

\section{Proof of Lemma \ref{lemma:high_welfare_conf_on_h}}\label{sec:high_welfare_conf_on_h}

\begin{proof}[Proof of~\Cref{lemma:high_welfare_conf_on_h}]
We will proceed by casework on $h$.

For $h=1$, a single high-value draw exists for one agent-item pair, $(i, j)$. To realize this value in the final allocation, agent $i$ must receive item $j$. We will show that with probability at least $1/4$ (over the choice of $(i, j)$ and other item values), agent $i$ does not receive item $j$.

For intuition, we first show why this is the case for an equal-sized trade, where $\ell = k$. Suppose without loss of generality that $i = 1$. With probability $1/2$, $j$ is not endowed to $i$. From agent 2's perspective, all items are worth either $0$ or $1$, and due to the equally sized bundle, they will prefer agent $1$'s bundle only with probability $1/2$. Thus, with probability $1/4$, the high-value item is not endowed to $1$ and agent $2$ is not willing to trade, which means that with probability $1/4$ agent $i$ does not receive item $j$, which proves Lemma \ref{lemma:high_welfare_conf_on_h} for the case when $\ell = k$ and $i = 1$.

Next, we extend this to unbalanced trades with $\ell \ne k$ but still with $i = 1$. 
Let $\tau$ be the probability that, if we sample values for bundles of size $\ell$ and $k$ from $\text{Bern}(q) + \text{Unif}[0, \delta]$, the $\ell$-bundle is more valuable than the $k$-bundle. Since ties occur with probability $0$, this means that the $k$-bundle is more valuable with probability $1 - \tau$. Without loss of generality, assume $\ell \ge k$, which implies that $\tau \ge 1/2$.

Note that if agent 1 has the high value, then agent 2 will be willing to trade with probability $1 - \tau$; conversely, if agent 2 has the high value, then agent 1 will be willing to trade with probability $\tau$. There are two events of interest: (i) Agent 2 has high value, it is in agent 1's bundle, and agent 1 is not willing to trade and (ii) Agent 1 has the high value, it is in agent 2's bundle, and agent 2 is not willing to trade. The first occurs with probability $\ell/(2s) \cdot \tau$ and the second occurs with probability $k/(2s) \cdot (1 - \tau)$. These are disjoint events, so the probability that at least one occurs is
\[
    \ell/(2s) \cdot \tau + k/(2s) \cdot (1 - \tau)  = \frac{\ell \tau + k (1-\tau)}{2s} \ge \frac{\ell/2 + k/2}{2s} = 1/4,
\]
where the inequality holds because $\tau \ge 1/2$ and $\ell \ge k$.

Thus, the single high-value item $i$ is not received by agent $j$ with probability at least $1/4$. Therefore, the expected welfare contribution of this single item is at most
$(1-1/4)\left(\frac{1-q}{p}+\delta\right) \le \frac{3}{4} \cdot h \cdot \frac{1-q}{p} + \delta h$.

We will now show the $h > 1$ case. Fix any $h > 1$. Let $r$ be the number of high-value items in endowed bundles, and $h - r$ be the remaining number in non-endowed bundles. The welfare contribution from high-value items in this case is bounded by 
\begin{equation}\label{eq:max_r_h}
\max(r, h-r) \cdot \frac{1 - q}{p} + \delta h= \left(\frac{h}{2} + |r - \frac{h}{2}| \right) \frac{1-q}{p} + \delta h.
\end{equation}
Therefore, to bound the expected welfare contribution from high-value items in this case,  we will bound $\E[|r - \frac{h}{2}|] \le \frac{h}{4}$. 

The distribution of $r$ is a hypergeometric with parameters $N = 2s, K = s, n = h$. Indeed, we are sampling $h$ pairs $(i, j)$ without replacement, and exactly half are in endowed bundles. However, for our purposes, it turns out that this distribution is convex dominated by a $r' \sim \text{Bin}(h, 1/2)$ distribution. Since $|r - \frac{h}{2}|$ is convex in $r$, $\E_{r \sim \text{HypGeo}(2s, s, h)}[|r - \frac{h}{2}|] \le \E_{r' \sim  \text{Bin}(h, 1/2)}[|r' - \frac{h}{2}|]$ (see Theorem 3.A.39 of \cite{shaked2007stochastic}). We will instead upper bound the binomial version.

For any $h \ge 4$, by Jensen's inequality:
\[\E[|r'-h/2|] \le \sqrt{\E[(r'-h/2)^2]} = \sqrt{\text{Var}(r')} = \sqrt{h/4} = \sqrt{h}/2 \le h/4.\]

For $h=2$: $\E[|r'-1|] = \frac{1}{4}|0-1| + \frac{1}{2}|1-1| + \frac{1}{4}|2-1| = 2/4$.

For $h=3$: $\E[|r'-3/2|] = \frac{1}{8}|0-3/2| + \frac{3}{8}|1-3/2| + \frac{3}{8}|2 - 3/2| + \frac{1}{8}|3 - 3/2| = 3/4$. 

Thus, for all $h > 1$ we have that
\[
\E[|r - h/2|] \le \E[|r' - h/2|] \le h/4.
\] 
Therefore, by Equation \eqref{eq:max_r_h}, we can conclude that the expected welfare contributed for high-value items is bounded by
\begin{align*}
    \E\left[\left(\frac{h}{2} + |r - \frac{h}{2}| \right) \frac{1-q}{p} + \delta h\right] &\le \left(\frac{h}{2} + \E[|r - \frac{h}{2}|] \right) \frac{1-q}{p} + \delta h \\
    &\le \left(\frac{h}{2} + \E[|r' - \frac{h}{2}|] \right) \frac{1-q}{p} +\delta h\\
    &\le \left(\frac{h}{2} + \frac{h}{4} \right) \frac{1-q}{p} +\delta h \\
    &= \frac{3h}{4} \frac{1-q}{p} + \delta h
\end{align*}
as desired to complete the proof of~\Cref{lemma:high_welfare_conf_on_h}.
\end{proof}

\section{Proof of Lemma \ref{lem:equiv}}\label{sec:equiv}

\begin{proof}[Proof of \Cref{lem:equiv}]
First, observe that in QT-$\mathbf{s}$, the probability that agent $i$ receives a specific item is exactly $r_i$. This follows by strong induction: the probability the item reaches agent $i$ is $1 - \sum_{k=1}^{i-1} r_k$, and conditioned on reaching $i$, it is taken with probability $s_i$. The product of these terms is $r_i$.

We begin with (1). Let $(A_1, \ldots, A_n)$ denote the allocation of Pick-$\mathbf{r}$ and $(A'_1, \ldots, A'_n)$ denote the allocation of QT-$\mathbf{s}$. 

By the above argument, we have that $|A'_i|$ follows a binomial distribution with expectation $\E[|A'_i|] = r_i m$. 
Let $C = \sqrt{m \log(m)} + 1 \in O(\sqrt{m\log(m)})$. Applying Hoeffding's inequality, $\Pr[||A'_i| - r_i \cdot m| > C - 1] \le 2 \exp(-2(\sqrt{m \log m})^2 / m) = 2/m^2$. Furthermore, $||A_i| - r_i \cdot m| \le 1$. Combining these, under the Hoeffding event ($||A'_i| - r_i \cdot m| > C - 1$), we have that $||A'_i| - |A_i|| \le C$. By a union bound over $n$ agents, this holds for all agents with probability at least $1 - O(n/m^2)$. We will condition on the event that $||A_i| - |A'_i|| \le C$, for all $i$, for the remainder of the proof.

Fix arbitrary quantiles $q_{i,j}$ such that the event holds, and condition on the probability $1$ event that they are all distinct.

Fix an agent $i$. Let $U_i$ and $U'_i$ be the set of available items for agent $i$ in Pick-$\mathbf{r}$ and QT-$\mathbf{s}$ respectively (where $U_i = \mathcal{M} \setminus \bigcup_{k = 1}^{i - 1} A_i$ and $U'_i = \mathcal{M} \setminus \bigcup_{k = 1}^{i - 1} A'_i$). Let $d_i = |U_i \Delta U'_i|$ be the number of items available in one mechanism but not the other. We define $U^*_i = U_i \cap U'_i$ as the set of items available in both mechanisms.

Both mechanisms are greedy with respect to the same values (quantiles). Therefore, when choosing from the common set $U^*_i$:
\begin{itemize}
    \item Pick-$\mathbf{r}$ selects the top $y_i =: |A_i \cap U^*_i|$ items from $U^*_i$.
    \item QT-$\mathbf{s}$ selects the top $y_i' =: |A'_i \cap U^*_i|$ items from $U^*_i$.
\end{itemize}
Since both select the best items from the same set, the selected subsets are nested. Specifically, if $y_i \le y'_i$, then $(A_i \cap U^*_i) \subseteq (A'_i \cap U^*_i)$, and vice versa. The number of disagreements on these common items is exactly the difference in count: $$|(A_i \cap U^*_i) \Delta (A'_i \cap U^*_i)| = |y_i - y'_i|.$$

We now bound the total allocation disagreement $|A_i \Delta A'_i|$. Let $x_i = |A_i \setminus U^*_i|$ be the items agent $i$ took from $U_i \setminus U'_i$ (items only available in Pick-$\mathbf{r}$). Let $x'_i = |A'_i \setminus U^*_i|$ be the items agent $i$ took from $U'_i \setminus U_i$ (items only available in QT-$\mathbf{s}$). Note that $x_i \le |U_i \setminus U'_i|$ and $x'_i \le |U'_i \setminus U_i|$, so $x_i + x'_i \le d_i$.

The total number of items where allocations $A_i$ and $A'_i$ disagree is the sum of disagreements from the disjoint sets and the common set:$$|A_i \Delta A'_i| = x_i + x'_i + |y_i - y'_i|.$$

We bound $|y_i - y'_i|$ using the bundle size constraint. We know $|A_i| = x_i + y_i$ and $|A'_i| = x'_i + y'_i$. From our conditioning, we established $| |A_i| - |A'_i| | \le C$. Substituting the sums:$$|(x_i + y_i) - (x'_i + y'_i)| \le C \implies |y_i - y'_i| \le C + |x_i - x'_i|.$$ Substituting this back into the disagreement equation:$$|A_i \Delta A'_i| = x_i + x'_i + |y_i - y'_i| \le x_i + x'_i + C + |x_i - x'_i|.$$ Since $x_i, x'_i \ge 0$, we have $|x_i - x'_i| \le x_i + x'_i$. Thus:$$|A_i \Delta A'_i| \le 2(x_i + x'_i) + C \le 2d_i + C.$$

Furthermore, from these values, we can derive $d_{i+1}$ (the disagreement passed to the next agent), which will allow use to define a recurrence. The set of available items updates as $U_{i+1} = U_i \setminus A_i$ and $U'_{i+1} = U'_i \setminus A'_i$. An item is in the symmetric difference $U_{i+1} \Delta U'_{i+1}$ if it was already a disagreement in $d_i$ and was \emph{not} picked, or if it was in the common set $U^*_i$ and was picked by exactly one mechanism (creating a new disagreement).$$d_{i+1} = \underbrace{(d_i - (x_i + x'_i))}_{\text{Unpicked input disagreements}} + \underbrace{|y_i - y'_i|}_{\text{New disagreements from } U^*_i}.$$

Substituting $|y_i - y'_i| \le C + |x_i - x'_i|$:$$d_{i+1} \le d_i - (x_i + x'_i) + C + |x_i - x'_i|.$$Using $|x_i - x'_i| \le x_i + x'_i$, the terms cancel to satisfy:$$d_{i+1} \le d_i + C.$$Since $d_1 = 0$ (both start with all items), it follows by induction that $d_i \le (i-1)C$.

Plugging this in above, we have that $|A_i \Delta A'_i| \le (2i - 1) C$. Summing over all $n$ agents:$$\sum_{i=1}^n |A_i \Delta A'_i| \le \sum_{i=1}^n (2i - 1)C = n^2 C.$$Substituting $C = O(\sqrt{m \log m})$, the total number of disagreement items is $O(n^2 \sqrt{m \log m})$.

We now turn to (2). Consider the expected welfare of QT-$\mathbf{s}$. Recall that we showed that each agent $i$ receives each item with probability $r_i$. These allocation decisions are made independently. Conditioned on item $j$ being allocated to agent $i$, the value $v_{ij}$ is guaranteed to be at least $F^{-1}(1-s_i)$. Thus, the expected welfare contribution of a single item is: $$w_{\text{avg}} = \sum_{i=1}^n r_i \cdot \E[X \mid X \ge F^{-1}(\tau_i)].$$ By linearity of expectation, the total expected welfare of $\text{QT-}\mathbf{s}$ is $ m \cdot w_{\text{avg}} = \alpha \cdot m \cdot \E[\max_i X_i]$ (by using the equality $\alpha = (\sum_{i=1}^n r_i \cdot \E[X \mid X \ge F^{-1}(\tau_i)]) / \E[\max_i X_i]$).

Now, consider Pick-$\mathbf{r}$. Conditioned on the event that the allocations differ by at most $O(n^2 \sqrt{m \log m})$ items, the difference in welfare can be at most $O(n^2 \sqrt{m \log m})$. Furthermore, in the remaining case (that occurs with $O(n/m^2)$ probability), the welfare difference is at most $m$. Hence, the contribution to the expected welfare is at most $m \cdot O(n/ m^2) = O(n/m)$. 

Dividing by the expected maximum welfare of $m \cdot \E[\max_i X_i]$, we get an approximation of
\[
    \frac{\alpha \cdot m \cdot \E[\max_i X_i] -O(n^2 \sqrt{m \log m}) - O(n/m)}{m \cdot \E[\max_i X_i]} = \alpha - o(1).
\]

Finally, we prove (3). 
    We must show that for any pair of agents $i, j$, the probability that agent $i$ envies agent $j$ approaches 0 as $m \to \infty$.

    We first analyze the valuations in QT-$\mathbf{s}$. Since valuations are independent across items, we calculate the expected value agent $i$ assigns to a single random item allocated to $i$ versus one allocated to $j$.
    
    \begin{itemize}
        \item \emph{Value for Own Bundle ($A_i$):}
        In $\text{QT-}\mathbf{s}$, agent $i$ receives an item with probability $r_i$. Conditioned on receipt, the item's value is at least $\tau_i$. Thus, the expected total value is:
        $$ \E[v_i(A_i)] = m \cdot r_i \cdot \E[X \mid X \ge \tau_i]. $$
        
        \item \emph{Value for Other Bundle ($A_j$) when $i < j$:}
        Since $i$ precedes $j$ in the priority sequence, $j$ can only receive an item if $i$ has already rejected it (implying $v_{i,k} < \tau_i$) and $j$ subsequently accepts it. Since valuations are independent across agents, the event that $j$ accepts the item provides no information about $i$'s value other than the fact that $i$ rejected it. Therefore, $i$'s expected value for an item in $j$'s bundle is conditioned only on $X < \tau_i$:
        $$ \E[v_i(A_j)] = m \cdot r_j \cdot \E[X \mid X < \tau_i]. $$
        
        \item \emph{Value for Other Bundle ($A_j$) when $i > j$:}
        Here, $j$ considers items before $i$. If $j$ selects an item, it is removed before $i$ sees it. Since $i$'s valuation is independent of $j$'s, the event that $j$ picked the item is independent of $i$'s specific value for it. Thus, $i$'s expected value for items in $A_j$ follows the unconditional distribution:
        $$ \E[v_i(A_j)] = m \cdot r_j \cdot \E[X]. $$
    \end{itemize}

    The conditions given in the lemma statement correspond exactly to the requirement that agent $i$ strictly prefers their expected bundle to agent $j$'s.
    Let $\Delta_{ij} = \E[v_i(A_i)] - \E[v_i(A_j)]$. In both cases, the strict inequality implies that $\Delta_{ij} = c \cdot m$ for some constant $c > 0$.

    In $\text{QT-}\mathbf{s}$, both $v_i(A_i)$ and $v_i(A_j)$ are sums of independent random variables bounded in $[0, 1]$. By Hoeffding's inequality, the realized difference $v_i(A_i) - v_i(A_j)$ concentrates around its expectation $\Delta_{ij}$ with deviations bounded by $O(\sqrt{m \log m})$ with high probability.
    
    Finally, we transfer this result to the $\text{Pick-}\mathbf{r}$ mechanism. From Part (1), we know that with high probability, the allocations of $\text{Pick-}\mathbf{r}$ and $\text{QT-}\mathbf{s}$ differ on at most $K = O(n^2 \sqrt{m \log m})$ items. Since values are bounded by 1, this difference affects the total valuation by at most $K$.
    Thus, for the $\text{Pick-}\mathbf{r}$ mechanism:
    $$ v_i(A_i) - v_i(A_j) \ge \underbrace{c \cdot m}_{\text{Expected Gap}} - \underbrace{O(\sqrt{m \log m})}_{\text{Concentration Noise}} - \underbrace{O(n^2 \sqrt{m \log m})}_{\text{Coupling Noise}}. $$
    For sufficiently large $m$, the linear term $c \cdot m$ dominates the sub-linear noise terms. Therefore, $v_i(A_i) > v_i(A_j)$ holds with probability approaching 1.
\end{proof}

\section{Proof of Theorem \ref{thm: main positive result for two agents}}\label{sec: main positive result for two agents}

\begin{proof}[Proof of \Cref{thm: main positive result for two agents}]

    For the rest of this proof, let $r = \frac{2-\sqrt{2}}{2}$. Fix any distribution $\mc{D}$. 

    By Part 3 of \Cref{lem:equiv}, the Pick-$r$ mechanism is envy-free with high probability if  
    \[
     r \cdot \E_{X \sim \mathcal{D}}[X \mid F_{\mathcal{D}}(X) \ge 1-r] > (1-r) \cdot \E_{X \sim \mathcal{D}}[X \mid F_{\mathcal{D}}(X) < 1-r],
    \]   
    Note that because $r < 0.5$, the other condition from Part 3 of \Cref{lem:equiv} holds immediately. 
    
    By  \Cref{thm:pos-welf-2}, if the above equation holds then we have that Pick-$(\tfrac{2-\sqrt{2}}{2})$ is both envy-free with high probability and a $\frac{2+\sqrt{2}}{4} - o(1)$ approximation to welfare, proving the desired result in this case.

    Now suppose that
    \[
     r\E_{X \sim \mathcal{D}}[X \mid F_{\mathcal{D}}(X) \ge 1-r] \le (1-r)\E_{X \sim \mathcal{D}}[X \mid F_{\mathcal{D}}(X) < 1-r].
    \]
    We will show (later in the proof) that the above equation implies that Pick-$  (\tfrac{1}{2}-\tfrac{1}{m^{0.25}})$ must be a $7/8 - o(1)\ge \frac{2+\sqrt{2}}{4} - o(1)$ approximation to optimal social welfare. Furthermore, for sufficiently large $m$ and any distribution $\mathcal{D}$ with non-$0$ variance, the Pick-$(\tfrac{1}{2}-\tfrac{1}{m^{0.25}})$ mechanism is envy-free with high probability by Hoeffding's Inequality. This is because agent 2 is allocated an extra $2m^{0.75}  \pm o(1)$ items and therefore prefers their bundle with high probability by Hoeffding's inequality. Agent 1 is getting $2m^{0.75} \pm o(1)$ fewer items than agent 2, but we can still show that they are envy-free with high probability as follows. The non-zero variance of $\mathcal{D}$ means that there must be some constant $\varepsilon > 0$ such that in expectation, agent 1 values their $(\tfrac{1}{2}-\tfrac{1}{m^{0.25}})m$ chosen items $\varepsilon m$ more in expectation than agent 1 values the items they leave for agent 2. Again by Hoeffding's Inequality, with high probability, this means that agent 1 strictly prefers their bundle even if agent  2 receives $2m^{0.75}$ more items (because $\varepsilon m > O(m^{0.75})$ for sufficiently large $m$). This means that with high probability, both agents do not envy each other, so Pick-$(\tfrac{1}{2}-\tfrac{1}{m^{0.25}})$ is EF with high probability, thus completing the proof of the main theorem in this case as well.
    
    The rest of the proof will be spent showing that 
    \[
     r\E_{X \sim \mathcal{D}}[X \mid F_{\mathcal{D}}(X) \ge 1-r] \le (1-r)\E_{X \sim \mathcal{D}}[X \mid F_{\mathcal{D}}(X) < 1-r].
    \]
    implies that Pick-$  (\tfrac{1}{2}-\tfrac{1}{m^{0.25}})$ must be  a $7/8 - o(1)$ approximation to optimal social welfare
    
    Rewriting the expectations as integrals gives
    \[
        r\int_0^{\infty} \Pr(X > s \mid F_{\mathcal{D}}(X) \ge 1-r)ds \le (1-r) \int_0^{\infty} \Pr( X > s \mid F_{\mathcal{D}}(X) < 1-r)ds,
    \]
    which simplifies to
    \[
    \int_0^{\infty} \min\left(r, 1-F_{\mathcal{D}}(s)\right) ds  \le \int_0^{\infty} \max\left(0,1-r-F_{\mathcal{D}}(s)\right) ds.
    \]
    Finally, subtracting the RHS gives
    \begin{equation}\label{eq:pick_r_not_EF}
    \int_0^{\infty} \left(r - |F_{\mathcal{D}}(s) - (1-r)| \right) ds  \le 0.
    \end{equation}
    Define $A$ as the LHS of this equation, i.e.

    \[
        A :=  \int_0^{\infty} \left(r - |F_{\mathcal{D}}(s) - (1-r)| \right) ds .
    \]
    Let $Q_{ALG}$ be the Pick-$(\tfrac{1}{2}-\tfrac{1}{m^{0.25}})$ mechanism and $Q_{OPT}$ be the welfare-maximizing mechanism. To prove that Pick-$(\tfrac{1}{2}-\tfrac{1}{m^{0.25}})$ is a $7/8-o(1)$ approximation to optimal social welfare, it suffices to show that 
        \[
        \int_{0}^{\infty} \Pr(Q_{ALG} > F_{\mathcal{D}}(s))ds \ge \left( \frac{7}{8}- o(1) \right)   \int_{0}^{\infty} \Pr(Q_{OPT} > F_{\mathcal{D}}(s))ds.
    \]
    Because $A \le 0$, to show the above equation it is sufficient to show that 
        \[
        \int_{0}^{\infty} \Pr(Q_{ALG} > F_{\mathcal{D}}(s)) +  \frac{A}{4}\ge \left( \frac{7}{8} - o(1) \right)   \int_{0}^{\infty} \Pr(Q_{OPT} > F_{\mathcal{D}}(s))ds.
    \]
    Also, as in \Cref{thm:pos-welf-2}, to show the above equation, it is sufficient to show that for all $t \in [0,1]$,    
    \begin{equation}\label{eq:need_to_show_ef}
        \Pr(Q_{ALG} > t) + \frac{r - |t - (1-r)|}{4} \ge \left( \frac{7}{8}- o(1) \right) (1-t^2).
    \end{equation}
    By the same logic as in \Cref{thm:pos-welf-2} , 
    \[
    \Pr(Q_{ALG} > t)  = 
    \begin{cases}
    1/2 + (1-t)/2 \pm o(1) & \text{if } t < 1/2, \\
   1 - t + (1-t)/2 \pm o(1) & \text{if } t \geq 1/2.
    \end{cases}
    \]
    Therefore, 
    \begin{equation}\label{eq:piecewise_ef}
       \Pr(Q_{ALG} > t) + \frac{r - |t - (1-r)|}{4}  = \begin{cases}
            \dfrac{3+2r-t}{4} \pm o(1), & 0 \le t < \dfrac{1}{2},\\[6pt]
            \dfrac{5+2r-5t}{4}\pm o(1), & \dfrac{1}{2} \le t < 1-r,\\[6pt]
            \dfrac{7-7t}{4}\pm o(1), & 1-r \le t \le 1.
\end{cases}
    \end{equation}
    Define 
    \[
    S(t) = \begin{cases}
            \dfrac{3+2r-t}{4} , & 0 \le t < \dfrac{1}{2},\\[6pt]
            \dfrac{5+2r-5t}{4}, & \dfrac{1}{2} \le t < 1-r,\\[6pt]
            \dfrac{7-7t}{4}, & 1-r \le t \le 1.
\end{cases}
    \]
    We will spend the rest of the proof showing that $S(t) \ge \frac{7}{8}(1-t^2)$ for all $t \in [0,1]$, which together with Equation \eqref{eq:piecewise_ef} will imply Equation \eqref{eq:need_to_show_ef}, thereby completing the proof. Define $D(T) = S(T) - \frac{7}{8}(1-t^2)$.
    
On each piece, $S(t)$ is affine, so $D(t)$ is a convex quadratic. This means to show that $D(T)$ is non-negative, it suffices to check its minimum on each piece.

\textbf{1) $0\le t<\tfrac12$.}
\[
D(t)=\frac{7}{8}t^2-\frac14\,t+\left(\frac{5-\sqrt{2}}{4}-\frac78\right),\qquad
D'(t)=\frac{7}{4}t-\frac14.
\]
The critical point is at $t=\tfrac{1}{7}\in[0,\tfrac12)$, hence
\[
D_{\min}=D\!\left(\tfrac{1}{7}\right)
=\frac{10-7\sqrt{2}}{28}>0.
\]

\textbf{2) $\tfrac12\le t<1-r$.}
\[
D(t)=\frac{7}{8}t^2-\frac{5}{4}\,t+\left(\frac{5+2r}{4}-\frac78\right),\qquad
D'(t)=\frac{7}{4}t-\frac{5}{4}.
\]
The root of $D'(t)$ is $t=\tfrac{5}{7}>1-r=\tfrac{\sqrt{2}}{2}$, so $D'(t)<0$ on this interval and the minimum occurs at the right endpoint:
\[
D_{\min}=D(1-r)=\frac{7}{8}(r)^2
=\frac{7}{8}\!\left(\frac{3}{2}-\sqrt{2}\right)>0.
\]

\textbf{3) $1-r\le t\le 1$.}
\[
D(t)=\frac{7}{4}(1-t)-\frac{7}{8}(1-t^2)
=\frac{7}{8}(t-1)^2\ge 0,
\]
with equality only at $t=1$.

\medskip
Combining the three cases above, we have shown that $D(t) \ge 0$ for all $t \in [0,1]$, completing the proof.   
\end{proof}

\end{document}